\documentclass[11pt,fleqn]{article}
\usepackage[utf8]{inputenc}
\usepackage[T1]{fontenc}
\usepackage{lmodern}
\usepackage[a4paper,margin=2.7cm]{geometry}
\usepackage[authoryear]{natbib}
\providecommand{\bsize}{\small}
\usepackage{amsmath,amssymb,amsfonts,mathtools,amsthm}
\IfFileExists{bbm.sty}{\usepackage{bbm}}{\newcommand{\mathbbm}[1]{\mathbb{##1}}}
\usepackage{bm}
\usepackage{graphicx}
\usepackage{booktabs}
\usepackage{tabularx}
\usepackage{longtable}
\usepackage[section]{placeins}
\usepackage{nicefrac}
\usepackage{url}
\usepackage{xcolor}
\usepackage{microtype}
\IfFileExists{algorithm2e.sty}{
  \usepackage[ruled,vlined]{algorithm2e}
}{
  \usepackage{float}
  \floatstyle{ruled}
  \newfloat{algorithm}{tbp}{loa}
  \floatname{algorithm}{Algorithm}
}
\usepackage[hidelinks]{hyperref}

\definecolor{revblue}{RGB}{0,0,0}
\definecolor{warnorange}{RGB}{0,0,0}
\definecolor{critred}{RGB}{0,0,0}
\definecolor{postlassopurple}{RGB}{0,0,0}
\definecolor{newgreen}{RGB}{0,0,0}
\definecolor{outsourcegreen}{RGB}{0,0,0}
\newenvironment{outsourceblock}{}{}
\DeclareRobustCommand{\outsource}[1]{#1}
\newcommand{\outsourceflag}[1]{}

\DeclareRobustCommand{\rev}[1]{#1}
\DeclareRobustCommand{\plrev}[1]{#1}
\DeclareRobustCommand{\newrev}[1]{#1}
\newenvironment{newrevblock}{}{}

\newtheorem{theorem}{Theorem}[section]
\newenvironment{acknowledgement}{\section*{Acknowledgements}}{}

\AtBeginDocument{\let\cite\citep}
\graphicspath{{./}}
\title{Model-based bootstrap inference for Cox models after Lasso selection}
\author{Lena Schemet\textsuperscript{1,*}, Andreas Groll\textsuperscript{2},
and Sarah Friedrich-Welz\textsuperscript{1,3}\\[0.8em]
\small \textsuperscript{1}Mathematical Statistics and Artificial Intelligence in Medicine,\\
\small University of Augsburg, Universit\"atsstra\ss e 14, 86159 Augsburg, Germany\\
\small \textsuperscript{2}Department of Statistics, TU Dortmund University,
44221 Dortmund, Germany\\
\small \textsuperscript{3}Center for Advanced Analytics and Predictive Sciences (CAAPS),\\
\small University of Augsburg, Universit\"atsstra\ss e 14, 86159 Augsburg, Germany\\[0.5em]
\small \textsuperscript{*}Corresponding author: \texttt{lena.schemet@uni-a.de}}
\date{}

\begin{document}
\maketitle

\begin{abstract}
Inference after variable selection in Cox regression is difficult because simple Wald-type intervals after
selection can have poor finite-sample conditional coverage. \newrev{We study a model-based bootstrap for
inference after Cox--Lasso variable selection. The Cox--Lasso is fitted once to the original data to select a
set of variables, after which an unpenalized Cox model is fitted using only those variables. Bootstrap samples
are generated from a semiparametric plug-in Cox model specified by the coefficient estimate from this
unpenalized Cox refit, the Breslow baseline cumulative hazard estimator, and a plug-in censoring distribution.
In every bootstrap sample, the selected variable set is kept fixed and only the unpenalized Cox model is
refitted.} Under oracle-type sparse-model assumptions and standard Cox model regularity conditions, we prove
first-order bootstrap validity for \newrev{this procedure}. In the simulation scenarios considered, percentile
and studentized bootstrap intervals showed improved conditional coverage relative to the bootstrap-
Wald interval in several small- and moderate-sample settings. Their performance was broadly competitive
with debiased intervals, although the comparison depended on signal strength, tuning, and selection stability.
A SEER breast cancer example illustrates that the procedure can be implemented in a realistic survival
analysis and provides interpretable uncertainty quantification for effects reported after variable selection.
\end{abstract}

\medskip
\noindent\textbf{Keywords:} Bootstrap; Cox regression; High-dimensional
inference; Post-selection inference; Survival analysis.

\section{Introduction}
\label{sec:intro}

The Cox proportional hazards model
\cite{andersen1982cox,andersen1993model,cox1972regression}
is a central tool in time-to-event analysis and is widely used in biomedical
and epidemiological research, including cancer survival, cardiovascular risk,
and treatment-effect analyses in clinical trials
\cite{bradburn2003survival,kleinbaum2012survival,schober2018survival}.
Applications range from large epidemiological cohort studies
\cite{berrington2010bmi} to molecular survival studies in oncology
\cite{dave2004prediction}.

Time-to-event analysis is increasingly applied to high-dimensional biomedical
data, such as gene expression profiles and other omics measurements, where the
number of covariates may greatly exceed the sample size
\cite{bommert2022feature,bovelstad2007predicting}.
In such settings, Cox regression is frequently combined with regularization
methods for estimation and variable selection. In particular, Lasso
penalization is widely used for high-dimensional Cox modelling
\cite{hohberg2024flexible,huang2013lasso,simon2011regularization,tibshirani1997lasso},
including biomedical applications with genomic and other high-dimensional
predictors \cite{binder2008allowing,zhao2024omics_survival}.

While penalized Cox estimators are effective for model fitting and prediction,
statistical inference after variable selection remains challenging. Naive
Wald-type confidence intervals for selected coefficients may have distorted
coverage because the selection step alters the sampling distribution of the
estimator. This phenomenon is well known in post-selection inference
\cite{Berk2013PostSelection,lee2016exact,taylor2015statistical} and can be
particularly relevant in time-to-event analysis, where right censoring adds
variability to the partial-likelihood score and information structure
\cite{andersen1993model,fan2002variable}.

Existing approaches for inference after variable selection include sample
splitting \cite{Berk2013PostSelection}, debiased or desparsified estimators
\cite{vandegeer2014optimal,zhang2014confidence}, including Cox-specific work by
\citet*{xia2023coxdiverging}, and exact
conditional post-selection inference
\cite{lee2016exact,taylor2015statistical}. These methods provide important
theoretical guarantees, but they often target parameters defined independently
of the selected Cox model reported in practice or rely on strong structural
assumptions \cite{Berk2013PostSelection,taylor2015statistical,taylor2018post}.
Moreover, selective coverage guarantees do not necessarily imply satisfactory
marginal coverage in finite samples \cite{Kammer2022LassoSelective}.

Bootstrap methods offer a conceptually simple alternative for approximating
sampling distributions beyond first-order asymptotics
\cite{davison1997bootstrap,efron1993bootstrap}. For Cox regression, however,
bootstrap procedures must preserve the semiparametric model structure, the
risk-set dependence, and the censoring mechanism
\cite{andersen1993model,davison1997bootstrap,hjort1985}. In addition, combining
bootstrap procedures with $\ell_1$-penalized estimators is nontrivial because
the objective function is non-smooth and the selected model is data-dependent
\cite{chatterjee2011bootstrapping,knight2000asymptotics}.

\plrev{In this manuscript, we propose a model-based bootstrap approach for
post-selection inference in Cox regression. Building on the semiparametric
plug-in principle of Hjort~\cite{hjort1985}, we generate bootstrap samples from
\newrev{a semiparametric plug-in Cox model} rather than by ordinary case resampling. The procedure
therefore preserves the fitted survival model, the Breslow baseline hazard
structure, and the dependence induced by risk sets.}

\begin{newrevblock}
The proposed procedure consists of two stages. First, an
$\ell_1$-penalized Cox proportional hazards model, hereafter referred to as the
Cox--Lasso, is fitted to the original data to select a set of variables.
Second, an unpenalized Cox model is fitted using only these selected variables;
we refer to this second fit as the \emph{unpenalized Cox refit}. In the
bootstrap samples, the selected variable set is kept fixed and only the
unpenalized Cox model is refitted. The bootstrap distribution therefore
approximates the sampling distribution of the coefficient estimator from this
refitted model without incorporating additional variability from repeating
variable selection. This is the construction analyzed theoretically in this
manuscript. Our goal is not to provide exact selective inference or to replace
penalized Cox estimation, but to quantify uncertainty for coefficients reported
after Cox--Lasso variable selection. Accordingly, the resulting confidence
intervals refer to the unpenalized Cox refit with the originally selected
variable set held fixed.
\end{newrevblock}

\plrev{We formulate the method as a semiparametric plug-in bootstrap in sparse
Cox regression with growing model dimension and establish its first-order
validity under oracle-type sparse-model conditions. We then assess its
finite-sample behavior in simulation scenarios relevant to applied survival
analysis and compare percentile, Wald-type, and studentized bootstrap intervals
with established alternatives. The simulations are used to characterize the
settings in which the different interval constructions perform well or poorly,
without making a uniform ranking claim.}

\plrev{The remainder of the manuscript is organized as follows.
Section~\ref{sec:methods} introduces the statistical framework and the proposed
model-based bootstrap procedure. Section~\ref{sec:asymptotics} establishes its
asymptotic validity. Section~\ref{sec:simstudy} describes the simulation study
design. Section~\ref{sec:example} provides an illustrative data example, and
Section~\ref{sec:discussion} concludes.}

\section{Methods}
\label{sec:methods}

This section describes the methodological framework of the study.
We first introduce the model setting and the penalized estimation procedures,
then describe the model-based bootstrap used for post-selection uncertainty quantification.
The asymptotic justification is deferred to Section~\ref{sec:asymptotics}.

\subsection{Penalized Cox regression}
\label{sec:prelim}

We briefly introduce the statistical framework underlying the analysis,
including the Cox model under right censoring and the penalized
estimation procedure considered in this work.

For a vector $X\in\mathbb{R}^p$, let $X^{(j)}$ denote its
$j$-th component.
For $1\le q\le\infty$, let $\|X\|_q$ denote the usual $\ell_q$-norm,
in particular $\|X\|_\infty=\max_{1\le j\le p}|X^{(j)}|$.
For any vector or matrix $A$, let $A^\top$ denote its transpose.
We write $\mathbbm{1}(\cdot)$ for the indicator function.

%We use the notation
%\[
%x^{\otimes 0}=1,
%\qquad
%x^{\otimes 1}=x,
%\qquad
%x^{\otimes 2}=xx^\top.
%\]
Let $X=(X^{(1)},\ldots,X^{(p)})^\top\in\mathbb{R}^p$
denote a $p$-dimensional covariate vector and let $T$
denote the event time of interest.
Under right censoring with censoring time $C$, we observe
\[
W=T\wedge C,
\qquad
\delta=\mathbbm{1}(T\le C).
\]

For $n$ independent individuals, we observe
$(W_i,\delta_i,X_i)$ for $i=1,\ldots,n$, where
\[
W_i=T_i\wedge C_i.
\]
In counting-process notation, we define
\[
N_i(t)=\mathbbm{1}(W_i\le t,\delta_i=1),
\qquad
Y_i(t)=\mathbbm{1}(W_i\ge t),
\]
where $N_i(t)$ is the event counting process and $Y_i(t)$ the at-risk indicator.
We use uppercase notation for covariates throughout. In the asymptotic
analysis, all probability statements are conditional on the observed
covariates, so the design $\{X_i\}_{i=1}^n$ is treated as fixed.

We assume non-informative censoring in the sense of
\citet{andersen1993model}, that is, conditional on the covariates $X$,
the censoring mechanism is independent of the event process and does not
depend on the unknown regression parameters.

We consider the Cox proportional hazards model~\cite{cox1972regression},
\[
\lambda(t\mid X)=\lambda_0(t)\exp(X^\top\textcolor{revblue}{\beta^\circ}),
\qquad t\in[0,\tau],
\]
\rev{Here $\tau<\infty$ denotes the fixed study horizon or maximal follow-up time used in the theoretical arguments.}
where $\lambda_0(t)$ is an unknown baseline hazard function and
\rev{$\beta^\circ=(\beta^\circ_1,\ldots,\beta^\circ_p)^\top$}
denotes the true regression parameter vector.
The corresponding conditional survival function is
\[
S_T(t\mid X)
=
\mathbb{P}(T>t\mid X)
=
\exp\!\left\{-\Lambda_0(t)\exp(X^\top\textcolor{revblue}{\beta^\circ})\right\},
\qquad
\Lambda_0(t)=\int_0^t \lambda_0(u)\,du.
\]

Given the observed data $(W_i,\delta_i,X_i)$, the log partial likelihood
for a generic $\beta\in\mathbb{R}^p$ is
\begin{equation}
\label{eq:partiallik}
\ell(\beta)
=
\sum_{i=1}^n \delta_i
\left(
X_i^\top\beta
-
\log\sum_{j:W_j\ge W_i}\exp(X_j^\top\beta)
\right).
\end{equation}

In the absence of penalization, estimation of $\beta^\circ$
is based on maximizing the log partial likelihood $\ell(\beta)$.

\iffalse
For $k=0,1,2$, define the population quantities
\[
e^{(k)}(\beta,t)
=
E\!\left[
Y(t)X^{\otimes k}\exp(X^\top\beta)
\right].
\]
Set
\[
v(\beta,t)
=
\frac{e^{(2)}(\beta,t)}{e^{(0)}(\beta,t)}
-
\left(
\frac{e^{(1)}(\beta,t)}{e^{(0)}(\beta,t)}
\right)^{\otimes 2}.
\]
The population information matrix at $\beta^\circ$ is
\[
I(\beta^\circ)
=
\int_0^\tau
v(\beta^\circ,t)\,
e^{(0)}(\beta^\circ,t)\,
d\Lambda_0(t).
\]

Let
\[
U_n(\beta)
=
\sum_{i=1}^n
\int_0^\tau
\{X_i-\bar X(\beta,t)\}\,dN_i(t)
\]
denote the Cox score, and define the asymptotic covariance matrix of the score by
\[
\Sigma(\beta^\circ)
=
\lim_{n\to\infty}
\mathrm{Var}\!\left(
n^{-1/2}U_n(\beta^\circ)
\right).
\]
Under correct model specification,
\[
\Sigma(\beta^\circ)=I(\beta^\circ),
\]
see \cite[Ch.~VII]{andersen1993model}.
\fi 
We allow the dimension \(p=p_n\) to increase with \(n\).
To obtain sparse estimators in this regime, we consider the
\(\ell_1\)-penalized Cox estimator
\[
\hat\beta
=
\arg\max_{\beta\in\mathbb{R}^p}
\left\{
\ell(\beta)-\gamma_n\|\beta\|_1
\right\},
\]
where \(\gamma_n>0\) is the penalty level, i.e.\ the scalar multiplying
the \(\ell_1\)-penalty in the objective function.
\rev{In the high-dimensional asymptotic regime, \(\gamma_n\) may depend on
\(n\) and is typically chosen to dominate the stochastic fluctuations of the
Cox score. We assume the standard Cox--Lasso order
\[
\gamma_n \asymp \sqrt{\frac{\log p_n}{n}},
\]
where \(a_n \asymp b_n\) means that \(a_n/b_n\) is bounded away from zero and
infinity by positive constants
\cite{buhlmann2011statistics,huang2013lasso}.}
The \(\ell_1\)-penalty induces sparsity by shrinking some coefficients
exactly to zero and thereby performs variable selection. However, the Lasso
estimator is biased toward zero and may include spurious variables,
particularly in finite samples \cite{Wainwright2009Sharp}.

Let
 ${M^\circ}
=
\{j:{\beta^\circ_j}\neq 0\}$
denote the true active set, and let
$s^\circ
=
|M^\circ|
$
denote its cardinality. Throughout, we consider a high-dimensional sparse
regime in which \(p=p_n\) may diverge with \(n\), while the true active set is
small relative to the sample size, \(s^\circ\ll n\).
\rev{Because the full coefficient vector cannot be estimated without
additional structure when \(p\) may exceed \(n\), we impose a compatibility
condition that ensures sufficient information in sparse directions. The
condition is stated precisely in Section~\ref{sec:boot-highdim}.}

\newrev{We now describe the bootstrap procedure used to quantify uncertainty
for the coefficients from the unpenalized Cox refit based on the selected
variable set.}

\subsection{Model-based bootstrap after Cox--Lasso variable selection}
\label{sec:mbboot}

The procedure separates variable selection from subsequent uncertainty
quantification. Let \(\widehat\beta^L\) denote the Cox--Lasso estimator fitted
in the original data and
\(\widehat M=\{j:\widehat\beta^L_j\neq 0\}\) \newrev{the selected variable set}. The penalized
estimator is used only for \newrev{variable selection}, not as the inferential target, because
the \(\ell_1\)-penalty shrinks nonzero coefficients and introduces
regularization bias.

\begin{newrevblock}
After variable selection, an ordinary unpenalized Cox model is fitted using
only the variables in \(\widehat M\), yielding the unpenalized Cox refit
estimator
\[
\widetilde\beta_{\widehat M}\in\mathbb R^{|\widehat M|}.
\]
For theoretical comparisons with the original \(p\)-dimensional parameter
vector, we also regard \(\widetilde\beta_{\widehat M}\) as an element of
\(\mathbb R^p\) by setting its coordinates outside \(\widehat M\) to zero.
This zero-padding convention is used only for notation; the Cox refit and the
generation of bootstrap samples use only the variables in \(\widehat M\).
The resulting coefficient estimates are the inferential objects considered
here.

The bootstrap approximates the sampling distribution of the unpenalized Cox
refit with the variable set selected in the original data held fixed.
Specifically, \(\widehat M\) is selected once and reused in all bootstrap
samples without repeating the variable-selection step.
\end{newrevblock}

Following the semiparametric plug-in bootstrap principle of
\citet{hjort1985}, bootstrap data are generated conditionally on the observed
covariates from \newrev{the plug-in Cox model defined by the unpenalized Cox refit}, with linear predictor
\(\tilde\eta_i
 =X_{i,\widehat M}^{\top}\tilde\beta_{\widehat M}\). Here, $X_{i,\widehat M}$ denotes the subvector of Xi restricted to selected coordinates.
The baseline cumulative hazard is estimated by the Breslow estimator evaluated at

The baseline cumulative hazard is estimated by the Breslow estimator evaluated
at the post-Lasso refit. In counting-process notation,
\[
\hat\Lambda_0(t;\tilde\beta_{\widehat M})
=
\int_0^t
\frac{dN(u)}
{E_n^{(0)}(\tilde\beta_{\widehat M},u)},
\qquad
E_n^{(0)}(\tilde\beta_{\widehat M},t)
=
\sum_{j=1}^n
Y_j(t)\exp\!\bigl(X_{j,\widehat M}^{\top}\tilde\beta_{\widehat M}\bigr),
\]
where \(N(t)=\sum_{i=1}^nN_i(t)\) and
\(Y_i(t)=\mathbbm{1}(W_i\ge t)\)
\citep{andersen1982cox,andersen1993model}. Equivalently, if \(t_{(k)}\)
denotes the ordered distinct event times, \(d_k\) the number of events at
\(t_{(k)}\), and \(R(t)\) the risk set, then
\[
\hat\Lambda_0(t;\tilde\beta_{\widehat M})
=
\sum_{k:t_{(k)}\le t}
\frac{d_k}
{\sum_{j\in R(t_{(k)})}
\exp\!\bigl(X_{j,\widehat M}^{\top}\tilde\beta_{\widehat M}\bigr)}.
\]
This is the usual Breslow estimator applied to \newrev{the unpenalized Cox refit};
see also \citet{lin2007breslow}.

The fitted conditional survival function used to generate bootstrap event times
is
\[
\hat S_T(t\mid X_i)
=
\exp\!\left\{
-\hat\Lambda_0(t;\tilde\beta_{\widehat M})
\exp\!\bigl(X_{i,\widehat M}^{\top}\tilde\beta_{\widehat M}\bigr)
\right\}.
\]

Bootstrap event times are generated by inverse transformation. Specifically,
for independent \(U_i\sim\mathrm{Unif}(0,1)\),
\[
T_i^*
=
\hat\Lambda_0^{-1}\!\left(
\frac{-\log U_i}
{\exp\!\bigl(X_{i,\widehat M}^{\top}\tilde\beta_{\widehat M}\bigr)}
\right),
\]
where \(\hat\Lambda_0^{-1}\) denotes the generalized inverse of the estimated
stepwise baseline cumulative hazard.

To preserve the censoring structure while maintaining conditional independence
between event and censoring times, bootstrap censoring times are generated from
a plug-in estimator of the censoring distribution. Under
covariate-independent censoring, we use the Kaplan--Meier estimator fitted to
\((W_i,1-\delta_i)\), denoted by \(\hat F_C\) \citep{kaplan1958}. If censoring
depends on covariates, a conditional plug-in estimator
\(\hat F_C(\cdot\mid X_i)\) may be used instead. The theoretical analysis covers
both settings. The resulting bootstrap observations are
\(W_i^*=T_i^*\wedge C_i^*\) and
\(\delta_i^*=\mathbbm{1}\{T_i^*\le C_i^*\}\).

\newrev{In each bootstrap sample, the unpenalized Cox model is refitted using the
originally selected variable set \(\widehat M\), without repeating the selection
step, yielding
\(\tilde\beta_{\widehat M}^{*(b)}\), \(b=1,\ldots,B\). Confidence intervals for
the selected coefficients are then constructed from the bootstrap distribution
of
\(\tilde\beta_{\widehat M}^{*(b)}-\tilde\beta_{\widehat M}\).}

\begin{algorithm}[t]
\caption{Model-based bootstrap after Cox--Lasso variable selection}
\label{alg:coxplugin}
\begin{enumerate}
\item Fit the Cox--Lasso in the original data and obtain the \newrev{selected variable set}
\(\widehat M=\{j:\widehat\beta_j^L\neq 0\}\).

\item Refit the unpenalized Cox model on \(\widehat M\) to obtain the
\newrev{unpenalized Cox refit} \(\tilde\beta_{\widehat M}\).

\item Estimate the baseline cumulative hazard
\(\hat\Lambda_0(\cdot;\tilde\beta_{\widehat M})\) by the Breslow estimator,
and estimate the censoring distribution by \(\hat F_C\) or
\(\hat F_C(\cdot\mid X_i)\).

\item For \(b=1,\ldots,B\):
\begin{enumerate}
\item Draw bootstrap event times \(T_i^{*(b)}\) from the fitted Cox survival
law based on \((\tilde\beta_{\widehat M},\hat\Lambda_0)\), and draw bootstrap
censoring times \(C_i^{*(b)}\) from the censoring plug-in.

\item Form \(W_i^{*(b)}=T_i^{*(b)}\wedge C_i^{*(b)}\) and
\(\delta_i^{*(b)}
 =\mathbbm{1}\{T_i^{*(b)}\le C_i^{*(b)}\}\).

\item Refit the unpenalized Cox model using the originally \newrev{selected variable set}
\(\widehat M\) using the bootstrap sample, yielding
\(\tilde\beta_{\widehat M}^{*(b)}\).
\end{enumerate}

\item Construct confidence intervals for the selected coefficients from
\(\tilde\beta_{\widehat M}^{*(1)},\ldots,
\tilde\beta_{\widehat M}^{*(B)}\).
\end{enumerate}
\end{algorithm}

\newrev{Algorithm~\ref{alg:coxplugin} therefore describes the proposed
model-based post-selection bootstrap. The Cox--Lasso determines the reporting
variable set, the unpenalized Cox refit defines the coefficient vector used for inference,
and the bootstrap approximates the variability of this refit under the corresponding
plug-in Cox model. Repeating variable selection inside the bootstrap is not part of this
core procedure.}

\subsection{Confidence intervals}
\label{sec:confidence-intervals}

For each selected variable \(j\in\widehat M\), inference is based on the
\newrev{unpenalized Cox refit coefficient} \(\tilde\beta_{\widehat M,j}\), not on the
penalized Cox--Lasso coefficient \(\widehat\beta^L_j\). In the model-based
bootstrap, the corresponding bootstrap coefficient is
\(\tilde\beta_{\widehat M,j}^{*(b)}\), computed by refitting the unpenalized Cox
model on the same support \(\widehat M\) in bootstrap sample \(b\).

We consider three interval constructions
\citep{davison1997bootstrap,efron1993bootstrap,hall1992bootstrap}.

Let \(\widehat q_{j,\alpha}\) denote the empirical \(\alpha\)-quantile of
\(\{\tilde\beta_{\widehat M,j}^{*(1)},\ldots,
\tilde\beta_{\widehat M,j}^{*(B)}\}\). The percentile interval is
\(\mathrm{CI}^{\mathrm{perc}}_j(1-\alpha)
=[\widehat q_{j,\alpha/2},\widehat q_{j,1-\alpha/2}]\).

The Wald-type bootstrap interval uses the empirical standard deviation of the
bootstrap refits,
\[
\widehat\mu^{*}_{j,B}
=
\frac{1}{B}\sum_{b=1}^B
\tilde\beta_{\widehat M,j}^{*(b)},
\qquad
\widehat{\mathrm{se}}_{j,\mathrm{boot}}
=
\left\{
\frac{1}{B-1}
\sum_{b=1}^B
\left(
\tilde\beta_{\widehat M,j}^{*(b)}
-
\widehat\mu^{*}_{j,B}
\right)^2
\right\}^{1/2}.
\]
With \(z_{1-\alpha/2}\) denoting the standard normal quantile,
\[
\mathrm{CI}^{\mathrm{wald}}_j(1-\alpha)
=
\left[
\tilde\beta_{\widehat M,j}
-
z_{1-\alpha/2}\widehat{\mathrm{se}}_{j,\mathrm{boot}},
\;
\tilde\beta_{\widehat M,j}
+
z_{1-\alpha/2}\widehat{\mathrm{se}}_{j,\mathrm{boot}}
\right].
\]

The studentized interval is based on
\[
T^{*(b)}_j
=
\frac{
\tilde\beta_{\widehat M,j}^{*(b)}
-
\tilde\beta_{\widehat M,j}
}{
\widehat{\mathrm{se}}_{\widehat M,j}^{*(b)}
},
\]
where \(\widehat{\mathrm{se}}_{\widehat M,j}^{*(b)}\) is the Cox
information-based standard error computed in bootstrap sample \(b\) on the
originally selected support \(\widehat M\). Let
\(\widehat t_{j,\alpha}\) denote the empirical \(\alpha\)-quantile of
\(\{T_j^{*(1)},\ldots,T_j^{*(B)}\}\), and let
\(\widehat{\mathrm{se}}_{\widehat M,j}\) be the corresponding standard error
in the original refit. The studentized interval is
\[
\mathrm{CI}^{\mathrm{stud}}_j(1-\alpha)
=
\left[
\tilde\beta_{\widehat M,j}
-
\widehat t_{j,1-\alpha/2}
\widehat{\mathrm{se}}_{\widehat M,j},
\;
\tilde\beta_{\widehat M,j}
-
\widehat t_{j,\alpha/2}
\widehat{\mathrm{se}}_{\widehat M,j}
\right].
\]

%All intervals are therefore constructed for coordinates of the same selected-support Cox refit \(\tilde\beta_{\widehat M}\). The Cox--Lasso coefficient \(\widehat\beta^L\) enters only through the selected support \(\widehat M\).

\begin{newrevblock}
All three interval constructions target coordinates of the same unpenalized
Cox refit \(\widetilde\beta_{\widehat M}\). The Cox--Lasso estimator
\(\widehat\beta^L\) enters the procedure only through the selected variable set
\(\widehat M\).

The first-order validity of the studentized interval additionally requires
consistency of the Cox information-based standard-error estimators in the
original data and in the bootstrap samples. This consistency does not follow
from variable selection alone. Under the oracle-type selection-stability
assumptions imposed in Section~\ref{sec:asymptotics}, including suitable
conditions on the penalty level and signal strength, the event
\(\widehat M=M^\circ\) has probability tending to one. On this event, the
unpenalized Cox refit is the ordinary Cox partial-likelihood estimator in the
true active model. The required variance-consistency result then follows from
the usual consistency of the observed-information estimator for the ordinary
Cox model
\citep{andersen1982cox,andersen1993model,huang2013lasso}.
\end{newrevblock}

\section{Theoretical justification}
\label{sec:asymptotics}

This section provides a first-order theoretical justification for the
proposed model-based bootstrap procedure in Cox regression.
We first establish bootstrap validity in the classical fixed-dimensional
Cox model and then extend the result to the sparse
\(\ell_1\)-penalized Cox--Lasso setting using an oracle reduction argument.

\begin{newrevblock}
For the sparse Cox--Lasso setting, our results show that the conditional
distribution of
$\sqrt{n}\bigl(
\widetilde\beta_{\widehat M}^{*}
-
\widetilde\beta_{\widehat M}
\bigr)$
consistently approximates the first-order distribution of
$\sqrt{n}\bigl(
\widetilde\beta_{\widehat M}
-
\beta^\circ
\bigr).$

Here, \(\widetilde\beta_{\widehat M}\) is the unpenalized Cox refit estimator
based on the variable set \(\widehat M\) selected from the original data. The
same selected variable set is used in every bootstrap sample; variable
selection is not repeated within the bootstrap.

Percentile-type intervals follow most directly from this bootstrap
distributional approximation. Wald-type and studentized intervals additionally
require consistency of the quantities used to estimate or standardize their
sampling variability. The additional variance-consistency requirement is
stated explicitly for the studentized interval. Higher-order refinements are
not considered.
\end{newrevblock}

Throughout, \(\to_p\) denotes convergence in probability and
\(\Rightarrow\) weak convergence in distribution.
For bootstrap quantities,
\(\Rightarrow_P\) denotes conditional weak convergence in probability,
i.e.\ weak convergence of the bootstrap law conditional on the data.
For a random vector \(Z_n\) and a \(\sigma\)-field \(\mathcal D_n\),
\(\mathcal L(Z_n \mid \mathcal D_n)\) denotes the conditional distribution
(law) of \(Z_n\) given \(\mathcal D_n\).
The endpoint \(\tau\) is the fixed finite horizon introduced in
Section~\ref{sec:prelim}. We use the standard counting-process notation for
Cox models and for their bootstrap analogues, following the Andersen--Gill
formulation and the martingale framework for Cox regression
\citep{andersen1982cox,andersen1993model}. Full formulas for the risk-set
sums, score functions, observed information matrices, population information,
score covariance, and bootstrap counterparts are given in Supporting
Information~A. Related martingale and multiplier-bootstrap notation for
survival processes is discussed in work on wild and weird bootstrap procedures
for counting-process statistics
\citep{dietrich2023wild,dobler2015weird,dobler2018wild}.

\subsection{Model-based bootstrap validity in the classical Cox model}
\label{sec:boot-lowdim}

We begin with the case of fixed dimension $p$.
All regularity assumptions ensuring the counting-process formulation,
martingale structure, and asymptotic linearity of the Cox estimator
(Conditions~(C1)--(C6))
are stated in the Supporting Information.

Bootstrap validity additionally requires a consistent plug-in estimator
of the censoring distribution used to generate bootstrap censoring times.

\outsourceflag{detailed censoring plug-in conditions}
\begin{outsourceblock}
Depending on the censoring structure, we impose either:
\begin{enumerate}
\item[(C7)] \label{C7}
(Covariate-independent censoring (marginal KM plug-in))
$C\perp X$, bootstrap censoring times are drawn
$C_i^*\stackrel{iid}{\sim}\hat F_C$, and
\[
\sup_{t\in[0,\tau]}|\hat F_C(t)-F_C(t)|\to_p 0.
\]

\item[(C8)] \label{C8}
(Conditional censoring plug-in)
Bootstrap censoring times are drawn
$C_i^*\sim \hat F_C(\cdot\mid X_i)$, where
\[
\sup_{t\in[0,\tau],\, i\le n}
\big|\hat F_C(t\mid X_i)-F_C(t\mid X_i)\big|
\to_p 0.
\]
\end{enumerate}

\noindent
Condition~(C7) corresponds to covariate-independent censoring ($C\perp X$),
in which case $\hat F_C$ can be taken as the Kaplan--Meier estimator based on
$(W_i,1-\delta_i)$.
Condition~(C8) allows covariate-dependent censoring and requires a consistent
estimator of the conditional censoring law, e.g.\ via stratification, a Cox
model for censoring, or other flexible conditional survival estimators.
\end{outsourceblock}

\begin{theorem}[Bootstrap validity, fixed dimension]
\label{thm:app_mbboot_cox}
Assume the Cox model with fixed $p$ on $[0,\tau]$ and Conditions~(C1)--(C6).
Assume additionally either (C7) or (C8).

Generate bootstrap data conditionally on the observed sample
$\mathcal D_n=\{(W_i,\delta_i,X_i)\}_{i=1}^n$
by drawing independently, for $i=1,\dots,n$,
\[
T_i^* \mid X_i \sim \hat S(\,\cdot\mid X_i),
\qquad
\hat S(t\mid X_i)
=
\exp\{-\hat\Lambda_0(t)\exp(X_i^\top \hat\beta)\},
\qquad
C_i^* \sim \hat F_C(\cdot\mid X_i),
\]
independently of each other, and define
\[
W_i^* = T_i^* \wedge C_i^*,
\qquad
\delta_i^*=\mathbbm{1}(T_i^*\le C_i^*).
\]
Equivalently, $T_i^*$ may be generated by inverse transform sampling via
\[
T_i^*
=
\hat\Lambda_0^{-1}\!\left(
-\frac{\log U_i}{\exp(X_i^\top\hat\beta)}
\right),
\qquad
U_i\sim \mathrm{Unif}(0,1).
\]
Let $\hat\beta^*$ denote the Cox partial likelihood estimator
computed from $(W_i^*,\delta_i^*,X_i)$.

Then
\[
\mathcal L\!\left(
\sqrt{n}(\hat\beta^*-\hat\beta)
\mid \mathcal D_n
\right)
\Rightarrow_P
\mathcal L\!\left(
\sqrt{n}(\hat\beta-\beta^\circ)
\right).
\]
\end{theorem}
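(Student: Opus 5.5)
The plan is to reduce the bootstrap statement to an Andersen--Gill type asymptotic expansion that holds uniformly along a sequence of data-dependent ``true'' parameters. The target limit is $N(0,I(\beta^\circ)^{-1})$, by the classical Cox theory under (C1)--(C6).

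\emph{Conditional model.} Conditionally on $\mathcal D_n$, the bootstrap sample is an exact Cox model with parameters $(\hat\beta,\hat\Lambda_0)$. The baseline is a step function, so ties occur, and we adopt the Breslow convention for them. Censoring is independent and non-informative, since $C_i^*$ is drawn independently of $T_i^*$ given $X_i$. Hence
\[
M_i^*(t)=N_i^*(t)-\int_0^t Y_i^*(u)\exp(X_i^\top\hat\beta)\,d\hat\Lambda_0(u)
\]
is a martingale under the bootstrap law with respect to the bootstrap filtration. With a discrete compensator, the predictable variation also contains the term $(1-\Delta\hat\Lambda_0 e^{X_i^\top\hat\beta})$. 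These jumps are $O_p(1/n)$ uniformly, so the correction is negligible.

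\emph{Expansion.} A Taylor expansion of the bootstrap score around $\hat\beta$ gives
\[
\sqrt n(\hat\beta^*-\hat\beta)=\{n^{-1}\mathcal I_n^*(\bar\beta^*)\}^{-1}n^{-1/2}U_n^*(\hat\beta).
\]
Note that $U_n^*(\hat\beta)=\sum_i\int_0^\tau\{X_i-\bar X^*(\hat\beta,t)\}\,dM_i^*(t)$ is exactly a bootstrap martingale integral, because the $d\hat\Lambda_0$ terms cancel.

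\emph{Bootstrap LLN.} I would show that
\[
\sup_{t\le\tau,\ \|\beta-\beta^\circ\|\le\epsilon}\bigl\|n^{-1}S_n^{*(k)}(\beta,t)-s^{*(k)}_n(\beta,t)\bigr\|\to 0
\]
in bootstrap probability, for $k=0,1,2$. Here
\[
s^{*(k)}_n(\beta,t)=n^{-1}\sum_iX_i^{\otimes k}e^{X_i^\top\beta}\,\hat S(t^-\mid X_i)\,\{1-\hat F_C(t^-\mid X_i)\}.
\]
Uniformity follows from independence of the summands given $\mathcal D_n$, bounded covariates, and monotonicity in $t$ (a Glivenko--Cantelli bracketing argument). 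Next, $\hat\beta\to_p\beta^\circ$ and $\sup_t|\hat\Lambda_0-\Lambda_0|\to_p0$ (from (C1)--(C6)), together with (C7) or (C8), give $\hat S\to S_T$ uniformly. Therefore $s^{*(k)}_n\to s^{(k)}$ uniformly in probability, where $s^{(k)}$ are the population quantities. I would use a subsequence argument so that these convergences hold almost surely along subsequences, which lets me treat them deterministically.

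\emph{Consistency and CLT.} Concavity of the bootstrap log partial likelihood plus the LLN gives $\hat\beta^*\to\hat\beta$ in bootstrap probability (Andersen--Gill convexity lemma). The observed information then converges to $I(\beta^\circ)$. For the score, I would apply Rebolledo's martingale CLT conditionally. The predictable variation
\[
n^{-1}\sum_i\int\{X_i-\bar X^*\}^{\otimes2}Y_i^*e^{X_i^\top\hat\beta}\,d\hat\Lambda_0
\]
converges to
\[
\int_0^\tau v(\beta^\circ,t)s^{(0)}(\beta^\circ,t)\,d\Lambda_0(t)=I(\beta^\circ).
\]
This step uses the LLN and the fact that $d\hat\Lambda_0$ converges weakly to $d\Lambda_0$ with integrands converging uniformly. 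The Lindeberg condition is trivial because the covariates are bounded and each jump is of size $n^{-1/2}$. Slutsky then gives $N(0,I^{-1})$ conditionally in probability, which matches the limit of $\sqrt n(\hat\beta-\beta^\circ)$.

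\emph{Main obstacle.} The delicate step is the uniform bootstrap LLN together with the convergence of the at-risk structure. It requires showing that the bootstrap at-risk probability $\hat S(t^-\mid X_i)\{1-\hat F_C(t^-\mid X_i)\}$ stays bounded away from zero on $[0,\tau]$ with probability tending to one. This needs the positivity part of (C1)--(C6) at $\tau$ and uniform consistency of $\hat\Lambda_0$ up to $\tau$, including near the tail where risk sets are small. I would first try to transfer the positivity condition via uniform convergence, and handle the discreteness of $\hat\Lambda_0$ by bounding its maximal jump by $1/\min_t S^{(0)}_n$.
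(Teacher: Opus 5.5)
Your proposal is correct and follows the paper's proof sketch essentially step by step: plug-in consistency of $(\hat\beta,\hat\Lambda_0,\hat F_C)$, a martingale representation of the bootstrap score, a Taylor expansion around $\hat\beta$, a conditional martingale CLT, and conditional Slutsky, with more detail than the paper on the uniform bootstrap LLN, positivity at $\tau$, and the discreteness of $\hat\Lambda_0$. One small correction: under the stated generator $\hat S(t\mid X_i)=\exp\{-\hat\Lambda_0(t)e^{X_i^\top\hat\beta}\}$, the discrete hazard at a jump time $s$ is $1-\exp\{-e^{X_i^\top\hat\beta}\Delta\hat\Lambda_0(s)\}$, not $e^{X_i^\top\hat\beta}\Delta\hat\Lambda_0(s)$, so $U_n^*(\hat\beta)$ is a martingale integral only up to a drift bounded by a constant times $n\max_s\Delta\hat\Lambda_0(s)\,\hat\Lambda_0(\tau)=O_p(1)$; this drift vanishes after the $n^{-1/2}$ scaling, so your argument goes through unchanged.
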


\begin{proof}[Proof sketch]
A detailed proof is provided in Supporting Information~A.3.3.

\smallskip
\noindent
\emph{(i) Plug-in correctness.}
Under Conditions~(C1)--(C6), the Cox estimator is consistent
and the Breslow estimator is uniformly consistent. Under either (C7) or (C8),
the censoring distribution used in the bootstrap generator converges uniformly
to the true censoring law. Hence the bootstrap data-generating mechanism
converges in probability to the true data-generating Cox model.

\smallskip
\noindent
\emph{(ii) Martingale representation.}
Both the original sample and the bootstrap sample admit multiplicative
intensity representations for the counting processes. The Cox score process can
therefore be written as a sum of martingale stochastic integrals.

\smallskip
\noindent
\emph{(iii) Linear expansion.}
A Taylor expansion of the score equation yields
\[
\sqrt n(\hat\beta-\beta^\circ)
=
I(\beta^\circ)^{-1} n^{-1/2} U_n(\beta^\circ)
+ o_p(1),
\]
and analogously in the bootstrap world,
\[
\sqrt n(\hat\beta^*-\hat\beta)
=
I(\hat\beta)^{-1} n^{-1/2} U_n^*(\hat\beta)
+ o_{P^*}(1),
\]
where $o_{P^*}(1)$ denotes a remainder term that converges to zero in
bootstrap probability conditional on the data.

\smallskip
\noindent
\emph{(iv) Conditional martingale CLT.}
The predictable quadratic variation of the bootstrap score converges in
probability to the same limit as in the original sample. A conditional
martingale central limit theorem for triangular arrays yields
\[
n^{-1/2} U_n^*(\hat\beta)
\Rightarrow_P
N(0,I(\beta^\circ)).
\]
Combining these steps with conditional Slutsky's theorem gives the result.
\end{proof}

\subsection{Extension to the sparse Cox--Lasso setting}
\label{sec:boot-highdim}

We now allow the dimension \(p=p_n\) to grow with \(n\) and use the Cox--Lasso
for variable selection. The raw penalized Cox--Lasso coefficient is not the
inferential target in the theorem below because the penalty induces shrinkage
bias. \newrev{The target is the coefficient from the unpenalized Cox refit defined in
Section~\ref{sec:mbboot}, with the selected variable set held fixed. This is the
main procedure covered by the theorem below.}

\outsourceflag{detailed sparse-regime assumptions}
\begin{outsourceblock}
In addition to Conditions~(C1)--(C5) and either (C7) or (C8), we impose
sparse Cox--Lasso assumptions. The penalty level satisfies
\[
\gamma_n\asymp \sqrt{\frac{\log p}{n}},
\qquad
\max_{1\le j\le p}
\left|
\frac1n U_{n,j}(\beta^\circ)
\right|
\le \gamma_n/2
\quad\text{with probability tending to one},
\]
the true support \(M^\circ=\{j:\beta^\circ_j\neq0\}\) is sparse, and the
population Cox information operator satisfies a restricted compatibility
condition on sparse cones. We also assume beta-min and original-sample
selection stability,
\[
\min_{j\in M^\circ}|\beta^\circ_j|
\ge c_{\min}\gamma_n,
\qquad
\mathbb P(\widehat M=M^\circ)\to1.
\]
These assumptions are standard in sparse Lasso theory and Cox--Lasso oracle
arguments \citep{buhlmann2011statistics,huang2013lasso,Wainwright2009Sharp}.
The full technical formulation, including the active-model dimension
requirement used for fixed-coordinate statements, is given in Supporting
Information~A.
\end{outsourceblock}

\begin{theorem}[\newrev{Model-based bootstrap validity with the selected
variable set held fixed}]
\label{thm:cox_lasso_bootstrap_app}

Assume the Cox model on \([0,\tau]\) and the oracle-type sparse-regime
conditions described above. Let \(\widehat M\) be the variable set selected
from the original data, and let
\(\widetilde\beta_{\widehat M}\in\mathbb R^{|\widehat M|}\) denote the
coefficient vector from the unpenalized Cox refit using the variables in
\(\widehat M\). For the theoretical statement below, this vector is regarded
as an element of \(\mathbb R^p\) by setting its coordinates outside
\(\widehat M\) to zero.

Generate a model-based bootstrap sample from the plug-in Cox law based on
\[
\bigl(
\widetilde\beta_{\widehat M},
\widehat\Lambda_0,
\widehat F_C
\bigr).
\]
In each bootstrap sample, refit the unpenalized Cox model using the same
selected variable set \(\widehat M\), yielding the bootstrap coefficient
vector \(\widetilde\beta_{\widehat M}^{*}\), with the same zero-padding
convention. Then, for fixed active coordinates or fixed finite-dimensional
active contrasts,
\[
\mathcal L\!\left(
\sqrt n\bigl(
\widetilde\beta_{\widehat M}^{*}
-
\widetilde\beta_{\widehat M}
\bigr)
\mid \mathcal D_n
\right)
\Rightarrow_P
\mathcal L\!\left(
\sqrt n\bigl(
\widetilde\beta_{\widehat M}
-
\beta^\circ
\bigr)
\right).
\]
\end{theorem}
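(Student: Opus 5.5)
The plan is an oracle reduction to Theorem~\ref{thm:app_mbboot_cox}. Let \(\mathcal E_n=\{\widehat M=M^\circ\}\); by the selection-stability assumption, \(\mathbb P(\mathcal E_n)\to1\). Since \(\widehat M\) is computed once from the original data and frozen, \(\mathcal E_n\) is \(\mathcal D_n\)-measurable.

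On \(\mathcal E_n\), \(\widetilde\beta_{\widehat M}\) coincides with the ordinary Cox partial-likelihood estimator \(\widetilde\beta_{M^\circ}\) in the true active model. This model has covariates \(X_{i,M^\circ}\) and dimension \(s^\circ\). Its Breslow estimator and the censoring plug-in \(\widehat F_C\) are exactly those of the classical fixed-dimensional setting. The bootstrap generator, built from \((\widetilde\beta_{\widehat M},\widehat\Lambda_0,\widehat F_C)\), and the bootstrap refit on \(\widehat M\) then coincide with the construction of Theorem~\ref{thm:app_mbboot_cox} applied to the submodel with covariates \(X_{M^\circ}\). That submodel is correctly specified: \(\lambda(t\mid X)=\lambda_0(t)\exp(X_{M^\circ}^\top\beta^\circ_{M^\circ})\), because the coordinates of \(\beta^\circ\) outside \(M^\circ\) vanish. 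Hence the bootstrap law on \(\mathcal E_n\) is \(\mathcal L(\sqrt n(\widetilde\beta^*_{M^\circ}-\widetilde\beta_{M^\circ})\mid\mathcal D_n)\). Under zero-padding, its coordinates outside \(M^\circ\) are identically zero, in both worlds.

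Step two checks that (C1)--(C6) hold for the submodel with covariates \(X_{M^\circ}\). The sparse-regime assumptions should supply this: boundedness of covariates, positivity of the at-risk probability at \(\tau\), and nonsingularity of \(I_{M^\circ}(\beta^\circ)\), the last from the restricted compatibility condition restricted to \(M^\circ\). If \(s^\circ\) is fixed, Theorem~\ref{thm:app_mbboot_cox} applies verbatim. This yields \(\mathcal L(\sqrt n(\widetilde\beta^*_{M^\circ}-\widetilde\beta_{M^\circ})\mid\mathcal D_n)\Rightarrow_P N(0,I_{M^\circ}(\beta^\circ)^{-1})\) together with \(\sqrt n(\widetilde\beta_{M^\circ}-\beta^\circ_{M^\circ})\Rightarrow N(0,I_{M^\circ}(\beta^\circ)^{-1})\). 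Transferring from the submodel estimator back to \(\widetilde\beta_{\widehat M}\) uses a standard argument. For any bounded Lipschitz \(f\),
\[
\Bigl|\mathbb E^*f\bigl(\sqrt n(\widetilde\beta^*_{\widehat M}-\widetilde\beta_{\widehat M})\bigr)-\mathbb E^*f\bigl(\sqrt n(\widetilde\beta^*_{M^\circ}-\widetilde\beta_{M^\circ})\bigr)\Bigr|\le 2\|f\|_\infty\mathbbm 1(\mathcal E_n^c),
\]
and the right-hand side tends to zero in probability. The analogous unconditional statement gives \(\sqrt n(\widetilde\beta_{\widehat M}-\beta^\circ)-\sqrt n(\widetilde\beta_{M^\circ}-\beta^\circ)\to_p0\), since the difference vanishes on \(\mathcal E_n\). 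Combining these with conditional Slutsky and the bounded-Lipschitz characterization of \(\Rightarrow_P\) gives the claim for fixed coordinates and fixed finite-dimensional contrasts \(a^\top(\cdot)\) with \(a\) supported on \(M^\circ\).

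The main obstacle is the case \(s^\circ=s^\circ_n\to\infty\), which the active-model dimension requirement in Supporting Information~A presumably addresses. There Theorem~\ref{thm:app_mbboot_cox} does not apply directly. Instead I would redo its steps (iii)--(iv) for a fixed contrast \(a\) in a growing-dimensional Cox model, following Andersen--Gill/Huang-type expansions. Two ingredients are needed. The first is a remainder bound \(\|\sqrt n(\widetilde\beta_{M^\circ}-\beta^\circ)-I^{-1}n^{-1/2}U_{n,M^\circ}(\beta^\circ)\|=O_p(s^{\circ 3/2}/\sqrt n)\), which is negligible under a condition like \(s^{\circ3}/n\to0\); its bootstrap analogue must hold in \(P^*\)-probability. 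For the bootstrap analogue, the plug-in errors \(\sup_t|\widehat\Lambda_0-\Lambda_0|\), \(\|\widetilde\beta_{M^\circ}-\beta^\circ\|=O_p(\sqrt{s^\circ/n})\) and the censoring error from (C7)/(C8) must be shown to perturb the bootstrap information matrix by \(o_p(1)\) in operator norm. The second ingredient is a Lindeberg-type martingale CLT for the scalar \(a^\top I^{-1}n^{-1/2}U^*_n\). Its predictable variation is checked to converge to \(a^\top I_{M^\circ}(\beta^\circ)^{-1}a\), uniformly enough in dimension; controlling this uniformly in the bootstrap world is where I expect most of the work.
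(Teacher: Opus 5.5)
Your proposal is correct and follows essentially the same route as the paper. On the $\mathcal D_n$-measurable oracle event $\{\widehat M=M^\circ\}$, the plug-in generator and the bootstrap refit coincide with the fixed-dimensional construction of Theorem~\ref{thm:app_mbboot_cox} on the true active model. The event is then removed because it changes both the bootstrap law and the target law only by $o_P(1)$, and conditional Slutsky finishes the argument; your bounded-Lipschitz bound on $\mathcal E_n^c$ just makes that step explicit.

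The paper's sketch does not treat $s^\circ_n\to\infty$ separately; it defers the active-model dimension requirement to the Supporting Information, so your second plan goes beyond what the paper argues. If you pursue it, note that compatibility alone only bounds the smallest eigenvalue of $I_{M^\circ}(\beta^\circ)$ below by a multiple of $1/s^\circ$. Your remainder rate would therefore need a separate eigenvalue condition.
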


\begin{proof}[Proof sketch]
A detailed proof is provided in Supporting Information~A.4. Let
$A_n=\{\widehat M=M^\circ\}$ denote the oracle selection event. By the
selection-stability assumption, $\mathbb P(A_n)\to1$. On $A_n$,
$\tilde\beta_{\widehat M}$ is the ordinary Cox estimator in the true active
model, and the bootstrap refits the same active model. The fixed-dimensional
bootstrap result therefore applies on $A_n$. Since replacing the random
selected support by the oracle support changes the law only by $o_P(1)$,
conditional Slutsky's theorem gives the result.
\end{proof}

For studentized intervals, the Cox information-based standard errors must be
consistent in the \newrev{original unpenalized Cox refit} and in the bootstrap refits on
the same support. On the oracle event this is the usual observed-information
consistency for the ordinary Cox partial-likelihood estimator on the active
model \citep[Ch.~VII]{andersen1993model}. Together with
Theorem~\ref{thm:cox_lasso_bootstrap_app}, this gives first-order validity of
the studentized intervals for fixed active coordinates.

\section{Simulation study}
\label{sec:simstudy}

This section investigates the finite-sample performance of the proposed
model-based bootstrap procedure for post-selection inference in Cox
regression. The study evaluates coverage accuracy, interval length,
and error rates under controlled data-generating mechanisms that vary
sample size, covariate dimension, correlation structure, baseline
hazard, and censoring proportion.

\subsection{Design}
\label{sec:design}
The simulation design follows the ADEMP framework \citep*{morris2019using},
which organizes simulation studies in terms of their aims, data-generating
mechanisms, estimands, methods, and performance measures. This structure is
used here to separate the scientific question from the technical choices made
in the simulation implementation.

\subsubsection{Objectives}

The primary objective of the simulation study was to assess whether the
proposed bootstrap procedure improves selective coverage after Lasso-based
variable selection compared with standard post-selection Wald-type inference.
As a secondary performance measure, we considered the length of the resulting
confidence intervals.

\subsubsection{Data-generating mechanisms}

The simulation study was conducted in two stages. A broader initial scenario
grid was used for exploratory analyses, whereas the primary analyses reported
in this manuscript focused on small and moderate sample sizes most relevant
for the proposed bootstrap procedure, with
\[
n\in\{30,40,60,75,80,100,125,150,175,200,250\}
\quad\text{and}\quad
p\in\{10,20\}.
\]
Larger and additional exploratory scenarios are summarized in the
~B.1.

Covariates were generated from a multivariate normal distribution with mean
zero and autoregressive correlation structure
\[
\Sigma_{ij}=\rho^{|i-j|},
\qquad
\rho\in\{0,0.1\}.
\]
After generation, covariates were standardized within each simulated data set.
In selected exploratory scenarios, a subset of covariates was dichotomized to
obtain mixed continuous and binary covariate structures.

\outsourceflag{coefficient-pattern details and table}
\begin{outsourceblock}
The true regression parameter \(\beta^\circ\) followed one of four predefined
coefficient patterns representing different signal strengths and sparsity
levels, following the simulation design of Kammer et al.\
\cite{Kammer2022LassoSelective}. Specifically, we considered an all-ones
pattern, a high-contrast pattern, a realistic sparse pattern, and a very sparse
pattern (Table~\ref{tab:beta_patterns}). Coefficients not explicitly specified
in the corresponding pattern were set to zero.

\begin{table}[htbp]
  \centering
  \caption{Coefficient patterns used for the data-generating vector
  \(\beta^\circ\).}
  \label{tab:beta_patterns}
  \begin{tabularx}{0.9\textwidth}{@{} l l X @{}}
    \toprule
    \textbf{Pattern} & \textbf{Coefficient vector} & \textbf{Description} \\
    \midrule
    allones
      & \((1,1,\ldots,1)\)
      & Equal effects on all covariates. \\
    highcontrast
      & \((0.3,1.0,0.3,1.0,0,\ldots,0)\)
      & Alternating small and large effects. \\
    realistic
      & \((0.8,0.7,0.5,0.8,0,\ldots,0)\)
      & Moderate effects on four covariates. \\
    sparse
      & \((1,1,0,\ldots,0)\)
      & Two active coefficients. \\
    \bottomrule
  \end{tabularx}
\end{table}
\end{outsourceblock}

\outsourceflag{detailed event-time generation formulas}
\begin{outsourceblock}
Event times were generated from a Cox proportional hazards model with linear
predictor \(X^\top\beta^\circ\). Conditional on \(X\), survival times satisfy
\[
\Pr(T>t\mid X)
=
\exp\left\{-H_0(t)\exp(X^\top\beta^\circ)\right\}.
\]
Thus, using inverse transform sampling,
\[
T
=
H_0^{-1}\left(
\frac{-\log U}{\exp(X^\top\beta^\circ)}
\right),
\qquad
U\sim \mathrm{Unif}(0,1).
\]
Two baseline cumulative hazard functions were considered. For the exponential
baseline, \(H_0(t)=t\), giving
\[
T
=
\frac{-\log U}{\exp(X^\top\beta^\circ)}.
\]
For the Weibull baseline, \(H_0(t)=s t^k\), with \(s=1\) and \(k=2\), giving
\[
T
=
\left\{
\frac{-\log U}{s\exp(X^\top\beta^\circ)}
\right\}^{1/k}.
\]
\end{outsourceblock}

\outsourceflag{censoring, jitter, and tuning implementation details}
\begin{outsourceblock}
\begingroup
Right censoring was imposed through a non-informative administrative censoring
mechanism following the sampling idea of \citet{ramos2024sampling}. For each
Monte Carlo repetition, event times \(\{T_i\}_{i=1}^n\) were generated first.
For a target censoring proportion
\[
t_C\in\{0,0.10,0.30\},
\]
we then defined a censoring cutoff \(c^\star\) as the
\((1-t_C)\)-empirical quantile of the generated event times. For \(t_C=0\), no
administrative censoring was applied. Observed times and event indicators were
defined as
\[
Y_i=\min(T_i,c^\star),
\qquad
\delta_i=\mathbbm{1}(T_i\le c^\star).
\]
This construction yields censoring proportions close to the target level by
design.

To avoid numerical problems due to ties, small deterministic jitters were added
only to duplicated event and, where applicable, censoring times. For each group
of identical times, offsets of order \(10^{-8}\) were added to event times and
offsets of order \(5\cdot 10^{-9}\) were added to censoring times, each scaled
by the range of the corresponding time variable. These perturbations were used
only for numerical stability and preserve the ordering of observations up to
negligible changes.

For each simulated data set, Cox--Lasso tuning was performed using two
implementable penalty-selection rules: an AIC-based rule and the
\(\lambda_{\min}\) rule. The AIC-based rule selected the penalty by minimizing
an information criterion along the Cox--Lasso solution path. The
\(\lambda_{\min}\) rule selected the value of \(\lambda\) that minimized the
mean cross-validated Cox partial-likelihood deviance. Thus, among the
cross-validated candidate penalties, \(\lambda_{\min}\) corresponds to the
lowest estimated prediction error rather than to the more regularized
one-standard-error choice. These rules were included to assess whether
post-selection interval performance is sensitive to the amount of penalization
induced by the tuning choice.
% No penalty level was selected using knowledge of the true active set, the
% simulation target, or empirical coverage.
\endgroup
\end{outsourceblock}

\iffalse
\rev{For each simulated data set, Cox--Lasso tuning was performed using two
implementable penalty-selection rules: an AIC-based rule and the
\(\lambda_{\min}\) rule. The AIC-based rule selects the penalty by minimizing
an information criterion along the Cox--Lasso path, whereas the
\(\lambda_{\min}\) rule selects the penalty value corresponding to the minimum
of the implemented tuning criterion. These two rules were included to assess
whether post-selection interval performance is sensitive to the amount of
penalization induced by the tuning choice. No penalty level was chosen using
knowledge of the true active set or empirical coverage.}
\fi 

The primary design factors are summarized in Table~\ref{tab:design_factors}.

\begin{table}[htbp]
\centering
\caption{Primary design factors in the simulation study.}
\label{tab:design_factors}
\small
\begin{tabular}{ll}
\toprule
Factor & Levels \\
\midrule
Sample size \(n\) & \(30,40,60,75,80,100,125,150,175,200,250\) \\
Number of covariates \(p\) & \(10,20\) \\
Correlation \(\rho\) & \(0,0.1\) \\
Baseline hazard & exponential, Weibull \((k=2,s=1)\) \\
Coefficient pattern & allones, highcontrast, realistic, sparse \\
Target censoring proportion \(t_C\) & \(0,0.10,0.30\) \\
\rev{Penalty tuning rule} & \rev{AIC, \(\lambda_{\min}\)} \\
Nominal confidence level & \(90\%\) \\
Monte Carlo repetitions & \(900\) \\
Bootstrap repetitions & \(B=500\) primary; \(B\in\{200,900\}\) sensitivity analyses \\
\bottomrule
\end{tabular}
\end{table}
\subsection{Estimands and performance measures}
\label{sec:estimand}

\begin{newrevblock}
We use the term \emph{estimand} for the population quantity that a confidence
interval is intended to cover and distinguish it from the estimator calculated
from a data set and from the empirical criterion used to evaluate the interval
in the simulation study. In the present simulation, the data-generating Cox
coefficient \(\beta_j^\circ\) serves as the common reference estimand for
variable \(j\), but the compared methods address this coefficient within
different inferential frameworks.

Let \(\widehat M\) denote the variable set selected by the Cox--Lasso in an
original simulated data set. For the model-based bootstrap, the reported
estimator is the unpenalized Cox refit coefficient
\(\widetilde\beta_{\widehat M,j}\), and an interval is constructed only when
\(j\in\widehat M\). The raw penalized Cox--Lasso coefficient is not used for
inference. Under the oracle regime considered in
Section~\ref{sec:asymptotics}, the event \(\widehat M=M^\circ\) has probability
tending to one, and the unpenalized Cox refit estimator is consistent for
\(\beta_j^\circ\) for active variables.

As a comparison method, we consider a debiased Cox--Lasso estimator. This
method adds a bias-correction term to the penalized estimator and is designed
for componentwise inference on a fixed coordinate \(\beta_j^\circ\) of the
full high-dimensional parameter, irrespective of whether that variable was
selected by the Cox--Lasso. Under the oracle-type selection-stability assumptions used in Section~\ref{sec:asymptotics}, \(\mathbb P(\widehat M=M^\circ)\to1\). Hence, for active variables, the unpenalized Cox refit asymptotically concerns the
same data-generating coefficient \(\beta_j^\circ\) as the debiased estimator.
This common asymptotic target makes the debiased estimator an informative
benchmark for the proposed procedure. Nevertheless, the two methods address different finite-sample inferential problems. The model-based bootstrap concerns coefficients reported after variable selection and is evaluated conditional on the variable being selected, whereas the debiased method concerns a prespecified coordinate of the full parameter vector and does not require prior selection.
\end{newrevblock}

Because \(\beta_j^\circ\) is known in the simulation, it is used as the
reference value whenever variable \(j\) is selected in the original simulated
data set. The primary performance measure is the selection-conditional coverage
probability
\[
\mathrm{SCov}^{\mathrm{sim}}_j
=
\mathbb{P}\!\left(
\beta_j^\circ\in\mathrm{CI}_j
\mid
j\in\widehat M
\right).
\]
Thus, coverage is evaluated only in Monte Carlo repetitions in which \(j\) is
selected. This criterion describes the empirical operating characteristics of
intervals reported after variable selection. Comparisons with debiased
intervals should therefore be interpreted as comparisons under a common
data-generating reference value, but not under a common selection or
conditioning framework.

Secondary performance measures are mean confidence-interval width,
coefficient-averaged active and inactive coverage, and runtime; their empirical
estimators are given in Supporting Information~B.2. Coverage is treated as a
simulation performance measure in the sense of \citet{morris2019using} and is
evaluated conditionally on original-sample selection, as in comparative studies
of selective confidence intervals
\citep{Kammer2022LassoSelective}.
\subsection{Methods compared}

We compare conventional post-selection Cox inference based on an unpenalized
\texttt{coxph} refit on the selected support and the corresponding
normal-approximation Wald intervals; debiased Cox--Lasso inference as a
high-dimensional reference method; and \plrev{the proposed model-based
bootstrap with percentile, Wald-type, and studentized intervals.} The oracle
Cox estimator fitted on the true active set is included as an unattainable
benchmark representing correct model specification without selection error.

Cox--Lasso tuning parameters were selected in the original sample using either
an AIC-based rule or the \(\lambda_{\min}\) rule. The comparison between these
rules assesses sensitivity to an implementable tuning choice; no penalty was
selected using the true active set or empirical coverage.

\plrev{The model-based bootstrap was implemented as described in
Section~\ref{sec:mbboot} and Algorithm~\ref{alg:coxplugin}. The support selected
in the original sample was reused in all bootstrap samples, and event times
were generated from the fitted post-Lasso Cox model. Because censoring was
covariate-independent in the simulation design, bootstrap censoring times were
drawn from the Kaplan--Meier estimator fitted to
\((Y_i,1-\delta_i)\), treating censoring as the event of interest.}

\subsection{Results}
\label{sec:results}

\plrev{We report the finite-sample performance of naive post-selection Wald
inference, debiased Lasso inference, and the model-based bootstrap.}

The procedures are compared with respect to selective coverage, interval
length, and runtime, focusing on systematic patterns across sample size and
censoring.
\rev{All reported coverage values are evaluated conditional on selection in the
original sample. This makes the bootstrap variants directly comparable to each
other, but it means that the debiased estimator is evaluated under a
post-selection conditioning event that is not its primary theoretical target.}
%Methods are compared with respect to selective coverage,
%average interval length, and computational cost.
%Particular attention is given to the behavior across
%increasing sample size and varying censoring proportions,
%as these dimensions are most relevant for assessing
%finite-sample improvements beyond first-order asymptotic.
%\subsection{Selective coverage}

%Figure~\ref{fig:coverage} displays empirical selective coverage for a
%representative scenario with $p=20$.
%Naive Wald inference exhibits substantial u%ndercoverage in small
%samples, reflecting failure to account for the selection step.
%Debiased Lasso improves coverage relative to Wald inference but
%continues to show deviations from the nominal 90\% level when
%sample sizes are small or censoring is moderate.

%The proposed bootstrap procedure attains coverage closer to the
%nominal level across most scenarios.
%Coverage deviations decrease with increasing sample size for all
%methods, consistent with asymptotic arguments.

Additional results can be found in Supporting Information~B.3.
\begin{newrevblock}
Results are displayed separately for two implementable tuning strategies.
AIC-based tuning selects the value of \(\lambda\) that minimizes the AIC along
the Cox--Lasso solution path. Under cross-validation-based tuning,
\(\lambda_{\min}\) denotes the value of \(\lambda\) that minimizes the
\(K\)-fold cross-validation error. Both strategies use the same
\(\ell_1\)-penalty, but the selected value of \(\lambda\) determines the
effective strength of regularization and thereby affects the size and stability
of the selected variable set.

Neither tuning strategy uses knowledge of the true active set or of the value
of \(\lambda\) that would optimize empirical coverage. The comparison
therefore represents a sensitivity analysis based on practically implementable
tuning strategies rather than a comparison with an oracle choice of
\(\lambda\). The inferential estimand remains the data-generating coefficient
\(\beta_j^\circ\), while the tuning strategy determines which variables are
selected and hence which coefficients enter the selection-conditional coverage
evaluation.
\end{newrevblock}
\begin{newrevblock}
The figures include percentile, Wald-type, and studentized bootstrap intervals.
Across the displayed scenarios, the Wald-type bootstrap interval was the most
consistently problematic bootstrap variant, particularly in small samples and
for weaker active effects. In several scenarios, this pattern was more
pronounced under cross-validation-based tuning. This difference should not be
interpreted as arising from a different type of penalty; rather, the two tuning
strategies selected different values of \(\lambda\) and consequently different
variable sets.
The comparatively poor performance of the Wald-type interval is consistent
with its reliance on a symmetric normal approximation and on the empirical
bootstrap standard deviation as an estimate of sampling variability.
Percentile and studentized intervals were generally more stable and showed
similar empirical performance in several settings. This suggests that the main
finite-sample improvement arose from using quantiles of the bootstrap
distribution rather than normal critical values.
\end{newrevblock}
\rev{Relative to the debiased estimator, the model-based bootstrap showed a
mixed finite-sample pattern. It did not uniformly dominate the debiased approach
across scenarios. Its empirical coverage was most sensitive to sample size and
penalty tuning for weaker coefficients. These findings are important because
they identify the finite-sample regimes in which the proved bootstrap implementation
is useful, and also the regimes in which first-order theory does not eliminate
all post-selection distortion.}

\outsourceflag{coefficient-averaged active-coverage summary (not coefficient-specific)}
\begin{outsourceblock}
\outsource{To complement the coefficient-specific plots, Table~\ref{tab:coverage_active_by_n_main}
summarizes coefficient-averaged empirical selective coverage over truly active
coefficients for a representative setting (Weibull baseline, realistic
coefficient pattern, $t_C = 0.1$, $\rho=0$, $p=10$, $B=500$) across several
sample sizes up to $n=250$. For each active coefficient, coverage was computed
conditional on selection in the original sample and then averaged over active
coefficients. Similar coefficient-averaged summaries have been used in
simulation studies of post-selection inference, for example when reporting
coverage for confidence intervals of selected coefficients averaged across
coefficients with comparable effect size \citep{garciarasines2023splitting}.
However, because coverage is formally a coefficient-specific property and
selection effects can depend on signal strength and selection frequency
\citep{benjamini2005fdr,lee2016exact}, the table is intended only as a
descriptive summary of the sample-size trend and not as a replacement for the
coefficient-specific coverage assessment. In this averaged active-coefficient
summary, model-based bootstrap percentile and studentized coverage increase with sample
size but remain below nominal for the smaller samples. This pattern is reported
as a finite-sample limitation of this implementation. Consistent with the broader simulation results,
Wald-type intervals are not emphasized in the main table, whereas percentile and
studentized intervals are shown as the more relevant bootstrap variants.
Figure~\ref{fig:ci_width} reports the corresponding mean interval widths for
active coefficients across $n$.}

\begin{table}[htbp]
\centering
\scriptsize
\begin{tabular}{lrrrrr}
\toprule
Method & $n=50$ & $n=100$ & $n=150$ & $n=200$ & $n=250$ \\
\midrule
Debiased & 0.733 & 0.786 & 0.851 & 0.829 & 0.840 \\
Param Perc-Int & 0.615 & 0.666 & 0.716 & 0.752 & 0.818 \\
Param Std-Int & 0.614 & 0.666 & 0.718 & 0.756 & 0.814 \\
\bottomrule
\end{tabular}
\caption{\outsource{Coefficient-averaged empirical selective coverage for truly active
coefficients across sample sizes in the highlighted Weibull/realistic scenario
($p=10$, $t_C=0.1$, $\rho=0$, $B=500$). Coverage was first computed
separately for each active coefficient conditional on original-sample selection
and was then averaged over active coefficients. }}
\label{tab:coverage_active_by_n_main}
\end{table}
\end{outsourceblock}

%\plrev{Coefficient-specific diagnostics for the highlighted scenario are summarized through the coverage displays below. The target-1-only version of the manuscript omits the earlier diagnostic figure that included additional bootstrap variants outside the scope of this manuscript.}

%\begin{figure}[h]
%    \centering
%    \includegraphics[width=.7\linewidth]{manuscript_x3_lambda_n100.pdf}
%    \caption{Empirical coverage probabilities of confidence intervals for coefficient $\beta_3$ as a function of the sample size $n$ for $p=10$ and $p=20$. Panels correspond to different choices of the regularization parameter selection rule (AIC, Min, Fixed). Results are shown for the oracle estimator, the debiased estimator, and three bootstrap-based intervals (Percentile, Studentized, and Wald). The dashed horizontal line indicates the nominal coverage level. Only sample sizes $n \le 100$ are displayed. \rev{The fixed-penalty panel is included as a sensitivity display and is not interpreted as an oracle choice of the true coverage-optimal penalty.}}
%    \label{fig:coverage_x3_lambda_n100}
%\end{figure}
\begin{figure}[h]
    \centering
    \includegraphics[width=.7\linewidth]{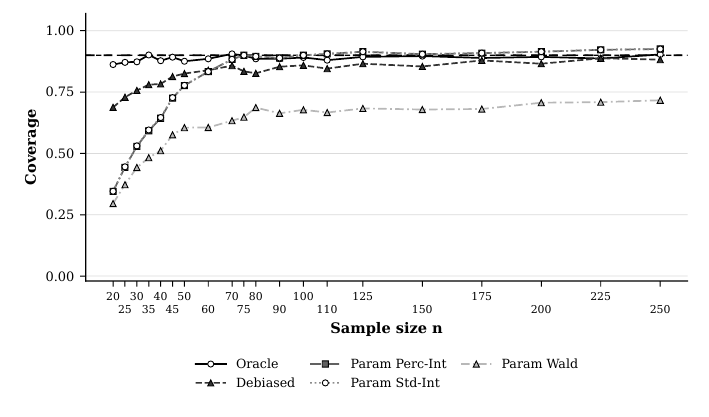}
    \caption{Coefficient-specific empirical selective coverage for coefficient
    \(\beta_2\) in the Weibull/realistic scenario with the \(\lambda_{\min}\)
    tuning rule, target censoring 0.1, \(\rho=0\), 900 Monte Carlo repetitions,
    and \(B=200\) bootstrap repetitions. The display is shown up to \(n=250\).
    The dashed horizontal line denotes the nominal 90\% confidence level.
    Results are shown for the oracle estimator, the debiased estimator, and
    percentile, studentized, and Wald-type parametric bootstrap intervals.}
    \label{fig:coverage_beta2_old}
\end{figure}

\rev{Figure~\ref{fig:coverage_beta2_old} provides a coefficient-specific
coverage diagnostic for the highlighted Weibull/realistic scenario. For very
small samples, percentile and studentized bootstrap intervals can still
under-cover, reflecting the difficulty of conditional post-selection statements
when selection is unstable. As sample size increases, these intervals move
towards the nominal level and are closer to nominal than the bootstrap-Wald
interval over a substantial range of moderate sample sizes.}

\outsourceflag{full numerical censoring-sensitivity table}
\begin{outsourceblock}
\outsource{Table~\ref{tab:beta3_censoring_coverage_values_main} reports the
corresponding numerical censoring-sensitivity results for \(\beta_3\) under the
\(\lambda_{\min}\) rule. Replacing the plot by the table makes the finite-sample
pattern explicit. For \(p=10\), model-based bootstrap percentile and studentized
coverage increase with sample size and approach the nominal level by \(n=100\),
although coverage is lower at higher censoring levels in the small samples. For
\(p=20\), undercoverage is more pronounced, particularly at \(t_C=0.3\), where
coverage remains unstable even at the largest displayed sample sizes. The
Wald-type bootstrap interval is consistently the lowest-coverage bootstrap
variant, whereas the debiased estimator is less sensitive to censoring but still
mostly remains below the nominal level. Thus, the censoring results reinforce
that the model-based bootstrap improves with sample size but does not remove
all finite-sample effects, especially when censoring and model dimension make
selection less stable.}

\begin{table}[h]
\centering
\scriptsize
\resizebox{\textwidth}{!}{%
\begin{tabular}{lllrrrrrr}
\toprule
\(p\) & \(t_C\) & Method & \(n=30\) & \(n=40\) & \(n=60\) & \(n=75\) & \(n=80\) & \(n=100\) \\
\midrule
10 & 0.0 & Oracle & 0.862 & 0.899 & 0.887 & 0.888 & 0.876 & 0.903 \\
 &  & Debiased & 0.763 & 0.825 & 0.830 & 0.846 & 0.843 & 0.875 \\
 &  & Param Perc-Int & 0.524 & 0.680 & 0.851 & 0.864 & 0.916 & 0.922 \\
 &  & Param Std-Int & 0.524 & 0.681 & 0.851 & 0.864 & 0.916 & 0.922 \\
 &  & Param Wald & 0.424 & 0.572 & 0.703 & 0.705 & 0.737 & 0.756 \\
\addlinespace
10 & 0.1 & Oracle & 0.864 & 0.887 & 0.891 & 0.878 & 0.877 & 0.900 \\
 &  & Debiased & 0.783 & 0.819 & 0.835 & 0.833 & 0.839 & 0.866 \\
 &  & Param Perc-Int & 0.509 & 0.619 & 0.756 & 0.776 & 0.846 & 0.908 \\
 &  & Param Std-Int & 0.509 & 0.619 & 0.756 & 0.776 & 0.846 & 0.908 \\
 &  & Param Wald & 0.401 & 0.514 & 0.644 & 0.642 & 0.695 & 0.749 \\
\addlinespace
10 & 0.3 & Oracle & 0.864 & 0.883 & 0.889 & 0.885 & 0.898 & 0.907 \\
 &  & Debiased & 0.767 & 0.795 & 0.821 & 0.842 & 0.870 & 0.861 \\
 &  & Param Perc-Int & 0.448 & 0.556 & 0.694 & 0.771 & 0.801 & 0.868 \\
 &  & Param Std-Int & 0.448 & 0.556 & 0.694 & 0.771 & 0.801 & 0.868 \\
 &  & Param Wald & 0.348 & 0.458 & 0.584 & 0.661 & 0.674 & 0.730 \\
\addlinespace
20 & 0.0 & Oracle & 0.882 & 0.872 & 0.882 & 0.898 & 0.881 & 0.903 \\
 &  & Debiased & 0.739 & 0.743 & 0.807 & 0.833 & 0.827 & 0.851 \\
 &  & Param Perc-Int & 0.370 & 0.501 & 0.690 & 0.778 & 0.815 & 0.875 \\
 &  & Param Std-Int & 0.371 & 0.501 & 0.690 & 0.778 & 0.815 & 0.875 \\
 &  & Param Wald & 0.245 & 0.357 & 0.496 & 0.578 & 0.594 & 0.632 \\
\addlinespace
20 & 0.1 & Oracle & 0.860 & 0.864 & 0.877 & 0.881 & 0.883 & 0.892 \\
 &  & Debiased & 0.754 & 0.751 & 0.818 & 0.830 & 0.825 & 0.844 \\
 &  & Param Perc-Int & 0.357 & 0.479 & 0.664 & 0.753 & 0.785 & 0.843 \\
 &  & Param Std-Int & 0.357 & 0.479 & 0.665 & 0.753 & 0.785 & 0.843 \\
 &  & Param Wald & 0.223 & 0.338 & 0.497 & 0.554 & 0.582 & 0.616 \\
\addlinespace
20 & 0.3 & Oracle & 0.860 & 0.865 & 0.890 & 0.888 & 0.910 & 0.899 \\
 &  & Debiased & 0.813 & 0.775 & 0.811 & 0.834 & 0.844 & 0.843 \\
 &  & Param Perc-Int & 0.287 & 0.403 & 0.589 & 0.812 & 0.715 & 0.706 \\
 &  & Param Std-Int & 0.287 & 0.403 & 0.589 & 0.812 & 0.715 & 0.706 \\
 &  & Param Wald & 0.189 & 0.284 & 0.448 & 0.612 & 0.543 & 0.492 \\
\bottomrule
\end{tabular}%
}
\caption{\outsource{Numerical empirical selective coverage values for the
censoring-sensitivity analysis of coefficient \(\beta_3\). The table reports the
Weibull/realistic scenario with the \(\lambda_{\min}\) tuning rule, \(\rho=0\),
target censoring levels \(t_C\in\{0,0.1,0.3\}\), \(p\in\{10,20\}\), and
model-based bootstrap intervals. %Where duplicate runs were available for the same configuration, coverage was aggregated using the corresponding coverage denominators.
}}
\label{tab:beta3_censoring_coverage_values_main}
\end{table}
\end{outsourceblock}

\outsourceflag{interval-width display}
\begin{outsourceblock}
\outsource{To summarize interval width more compactly, Figure~\ref{fig:ci_width}
shows box plots of the coefficient-wise mean confidence-interval widths for the
same highlighted scenario. This representation emphasizes the width
distribution across active coefficients and sample sizes rather than only the
sample-size trend. The model-based bootstrap percentile and studentized intervals are
wider than the Wald-type bootstrap intervals, which is consistent with their
improved coverage. \outsource{The width display refers to the implementation described in Algorithm~\ref{alg:coxplugin}.}}

\begin{figure}[h]
    \centering
    \includegraphics[width=.76\linewidth]{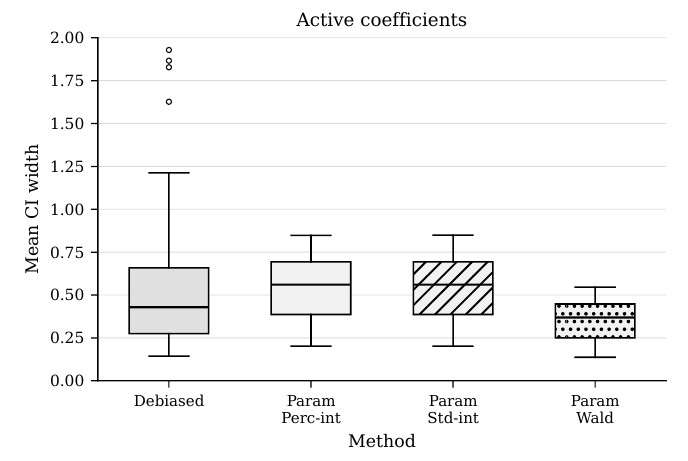}
    \caption{\outsource{Distribution of coefficient-wise mean confidence-interval
    widths for active coefficients in the highlighted Weibull/realistic
    scenario ($p=10$, $t_C=0.1$, $\rho=0$, $B=500$).}}
    \label{fig:ci_width}
\end{figure}
\end{outsourceblock}

%\subsection{Selective type~I error}

%Selective type~I error is inflated for naive Wald inference in
%settings with moderate correlation or censoring.
%Debiased Lasso improves control but exhibits mild inflation in
%smaller samples.
%The bootstrap procedure maintains rejection rates closer to the
%nominal level across most scenarios.

%\subsection{Computational performance}

\rev{Runtime per simulation replication was benchmarked for the naive Wald
procedure and the model-based bootstrap implementation. The benchmark used
\(B=1000\) bootstrap replications per simulation run. In addition to the primary
dimensions \(p=10\) and \(p=20\), larger values of \(p\) were included to
illustrate computational scaling. The model-based bootstrap was substantially
more expensive than the naive procedure but remained practically feasible across
the reported dimensions. Figure~\ref{fig:runtime_old} summarizes the runtime
benchmark; exact numerical runtime summaries are provided in the Supporting
Information.}

\begin{figure}[h]
    \centering
    \includegraphics[width=.76\linewidth]{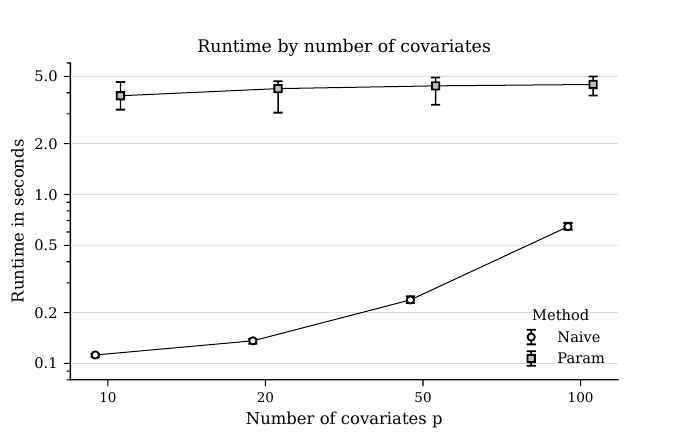}
    \caption{Runtime benchmark for the conventional post-selection Wald
    procedure and the parametric bootstrap. Points show median runtimes and
    intervals summarize the interquartile range across repeated benchmark runs;
    the vertical axis is shown on a log scale to display both procedures in the
    same panel.}
    \label{fig:runtime_old}
\end{figure}

\iffalse
\begin{table}[htbp]
\centering
\begin{tabular}{lrrrr}
\toprule
p & Method & Median seconds & Q25 & Q75 \\
\midrule
10 & Naive & 0.112 & 0.110 & 0.115 \\
10 & Param & 3.838 & 3.179 & 4.633 \\
20 & Naive & 0.135 & 0.131 & 0.140 \\
20 & Param & 4.231 & 3.052 & 4.682 \\
50 & Naive & 0.238 & 0.228 & 0.250 \\
50 & Param & 4.393 & 3.400 & 4.918 \\
100 & Naive & 0.644 & 0.619 & 0.679 \\
100 & Param & 4.476 & 3.854 & 4.989 \\
\bottomrule
\end{tabular}
\caption{\rev{Absolute runtime summary for the parametric bootstrap benchmark. Values are medians with interquartile ranges over repeated runs.}}
\label{tab:runtime_absolute_summary}
\end{table}
\fi
%\subsection{Summary}

\rev{Across simulation scenarios, the proposed bootstrap procedure
improves selective coverage relative to the most naive Wald-type constructions,
but the finite-sample pattern is not uniform across coefficients, tuning rules,
and censoring levels. In the focused Weibull/realistic summary shown in
Table~\ref{tab:coverage_active_by_n_main}, the model-based bootstrap variants
show increasing active-coefficient coverage with sample size.\plrev{The remaining undercoverage observed in the smallest samples is
reported as a finite-sample limitation of the proposed bootstrap procedure.
Performance differences generally decreased with increasing sample size,
consistent with the first-order asymptotic result established in
Section~\ref{sec:asymptotics}.}}

\section{Real data example}
\label{sec:example}

\begingroup

This section applies the proposed bootstrap procedure to a real survival
dataset. The purpose of the example is methodological: we use a realistic
selected Cox model to assess the behavior of the model-based
bootstrap after variable selection via the Cox--Lasso. In particular, the
example illustrates that the procedure can be implemented in a real
post-selection survival analysis and compares the resulting intervals with a
standard post-selection Wald approximation. \plrev{The inferential target is
the selected-support Cox coefficient: the support is selected once in the
original data and then kept fixed when constructing the bootstrap intervals.}
This choice is motivated by the post-selection inference literature, which
shows that conventional inferential statements after data-driven model
selection can be anti-conservative when selection uncertainty is ignored
\cite{Berk2013PostSelection,lee2016exact,LeebPotscher2005,LeebPotscher2006,taylor2018post}.

\plrev{The real data example is not intended to establish definitive clinical
conclusions about breast cancer prognosis. Instead, it illustrates how the
resulting post-selection intervals behave in a realistic selected Cox model
containing a mixture of strong and weaker prognostic variables.}

\subsection{Data description}

The analysis is based on data from the Surveillance, Epidemiology,
and End Results (SEER) Program of the National Cancer Institute
\cite{seer_program}. We considered female patients
diagnosed with primary invasive breast cancer between 2010 and 2015.
To obtain a computationally manageable yet clinically homogeneous
analysis cohort, we restricted the sample to stage II--III disease and
drew a random subsample of 8000 individuals after preprocessing.

The primary endpoint was overall survival, defined as time from
diagnosis to death from any cause. Individuals alive at last follow-up
were treated as right-censored. The final analytic sample comprised
$n=8000$ patients, of whom 2489 experienced the event. The median
follow-up time was 102 months, and 68.9\% of observations were
censored. 
\plrev{We note that the censoring proportion in this registry-based example is
higher than in the main simulation scenarios. The analysis should therefore be
read as an illustration of the implementation and qualitative interval behavior
in a realistic post-selection Cox analysis with substantial right censoring,
rather than as an additional censoring-sensitivity study.}

\begin{table}[t]
\centering
\caption{Baseline characteristics of the SEER breast cancer analysis sample
(\(n=8000\)).}
\label{tab:seer_baseline}
\begin{tabular}{l r}
\toprule
\textbf{Characteristic} & \textbf{Summary} \\
\midrule
\multicolumn{2}{l}{\emph{Demographics}} \\
Age at diagnosis, years, mean (SD) & 58.8 (14.2) \\
White race, \% & 76.5 \\
Married, \% & 53.5 \\[0.3em]

\multicolumn{2}{l}{\emph{Tumor characteristics}} \\
Stage II, \% & 74.4 \\
Stage III, \% & 25.6 \\
High grade (III/IV), \% & 41.0 \\
Hormone receptor positive, \% & 77.8 \\[0.3em]

\multicolumn{2}{l}{\emph{Treatment}} \\
Surgery performed, \% & 94.3 \\
Radiation therapy, \% & 51.8 \\
Chemotherapy, \% & 63.6 \\[0.3em]

\midrule
Events, \(n\) & 2489 \\
Censored observations, \% & 68.9 \\
Median follow-up, months & 102 \\
\bottomrule
\multicolumn{2}{l}{\footnotesize Values are mean (SD) for age, percentages
unless otherwise indicated,} \\
\multicolumn{2}{l}{\footnotesize and counts where explicitly denoted by \(n\).}
\end{tabular}
\end{table}

The final design matrix contained $p=74$ covariates after preprocessing
and dummy coding. The covariates covered demographic characteristics,
tumor stage and grade, receptor status, tumor size, nodal involvement,
treatment indicators, rural--urban status, and income categories.
Continuous variables were standardized prior to model fitting, and
categorical variables were encoded using dummy variables.

\iffalse 
\subsection{Research question}

The objective is to assess the behavior of the proposed
model-based bootstrap after variable selection via the Cox--Lasso. The real data
example is not intended to establish definitive clinical conclusions
about breast cancer prognosis. Instead, it evaluates how the proposed
post-selection bootstrap intervals behave in a realistic survival
analysis in which the selected model contains a mixture of strong and
weaker prognostic variables.

\fi

\subsection{Results}

\begingroup
Cross-validated Cox--Lasso selection resulted in an active model containing
20 covariates. The selected variables included age, receptor-status indicators,
nodal involvement, tumor-stage indicators, tumor grade, race indicators, and
treatment variables. Figure~\ref{fig:example} summarizes 90\% intervals for the
selected coefficients. The display compares naive post-selection Wald inference
based on the post-Lasso Cox refit, the proposed model-based
bootstrap intervals, and a debiased Cox--Lasso interval as a
reference method. All coefficients are shown on the normalized design-matrix
scale used for model fitting.

The model-based bootstrap is the procedure aligned with the
theoretical target studied in this manuscript. It quantifies uncertainty for the Cox
coefficients reported after the original Cox--Lasso selection step, with the
selected support kept fixed throughout the bootstrap. The debiased interval is
included only as a reference and should not be interpreted as
target-equivalent to the bootstrap intervals. In particular, debiased Cox--Lasso
inference is motivated by componentwise inference for fixed
coordinates of the regression parameter, whereas the bootstrap intervals are
constructed for the selected-support Cox refit. Consequently, visible
differences between the debiased reference and the bootstrap intervals mainly
reflect differences in inferential target rather than a direct failure of one
interval construction.

Across the selected covariates, the model-based bootstrap
intervals led to similar qualitative conclusions for the strongest effects.
Chemotherapy, positive lymph nodes, PR negativity, absence of surgery, and
AJCC~T4 remained clearly separated from zero. Weaker selected variables, such as
race indicators, tumor size, and some AJCC nodal categories, showed larger
sensitivity to the chosen uncertainty construction. This is the part of the real
data example where the comparison is most informative: it illustrates how
standard post-selection Wald intervals, model-based bootstrap
intervals, and a debiased reference can differ for weaker
selected effects.

The interval displays were regenerated for \(B=300\), \(B=500\), and \(B=900\)
bootstrap replications. The qualitative pattern was stable across these
bootstrap budgets; the \(B=900\) result is shown in the main text, and the
corresponding \(B=300\) and \(B=500\) displays are reported in the
Supporting Information.
\endgroup

\begin{figure}[t]
    \centering
    \includegraphics[width=.98\linewidth]
    {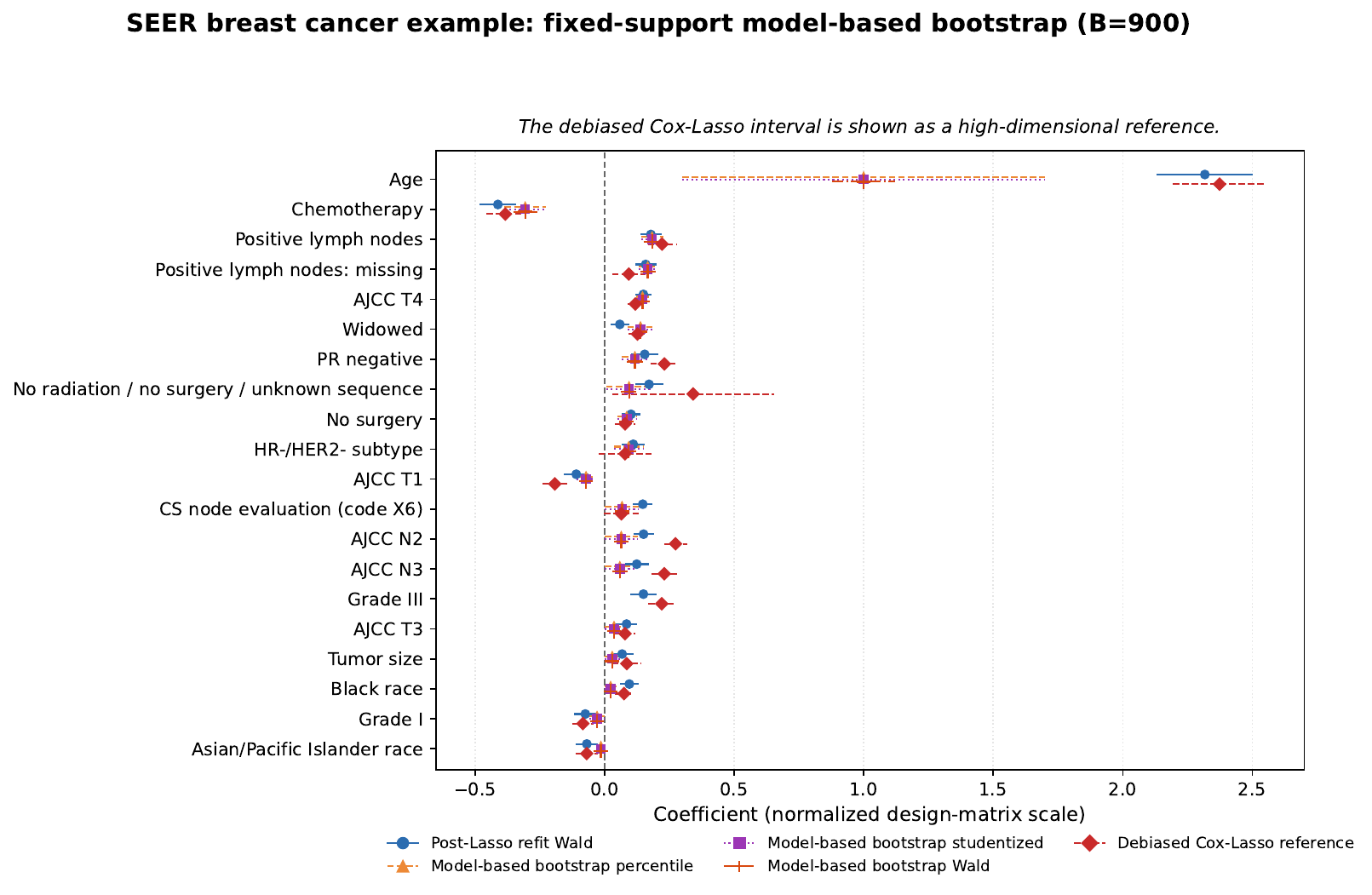}
    \caption{\newrev{Coefficient estimates and 90\% confidence intervals for
    variables selected by the Cox--Lasso in the SEER breast cancer example.
    The five displayed interval constructions are: (i) the
    normal-approximation Wald interval based on the unpenalized Cox refit,
    (ii) the model-based bootstrap percentile interval,
    (iii) the model-based bootstrap Wald-type interval,
    (iv) the model-based bootstrap studentized interval, and
    (v) the debiased Cox--Lasso reference interval. The three model-based
    bootstrap intervals correspond to the procedure in which the selected
    variable set is held fixed across bootstrap samples. Under the oracle-type
    assumptions, the bootstrap refit and the debiased estimator concern the
    same data-generating coefficients for active variables, but they address
    different finite-sample inferential problems. Coefficients are shown on the
    normalized design-matrix scale. The bootstrap intervals are based on
    \(B=900\) replications.}}
    \label{fig:example}
\end{figure}

\begingroup
Overall, the example demonstrates that the model-based bootstrap
can be applied to a realistic post-selection Cox model and yields interpretable
interval estimates for coefficients reported after Cox--Lasso selection.
Because the SEER analysis is an illustrative registry-based data example with
substantial right censoring, the numerical results should not be read as a
separate validation of coverage in this censoring regime. Rather, the example
shows how the interval constructions behave in one realistic analysis
containing both strong and weaker selected prognostic variables. The main
empirical message is the difference between standard post-selection Wald
inference, model-based bootstrap uncertainty quantification, and a
debiased reference.
\endgroup

\iffalse
Earlier no-debiased SEER block from the Biometrical-Journal draft was removed
from compilation because the current SEER analysis uses the newer figure with
the debiased reference. The restored old draft figures are still kept in the ZIP.
\fi

\endgroup

\section{Discussion}
\label{sec:discussion}

This work studies model-based bootstrap procedures for post-selection
inference in Cox regression. We establish first-order bootstrap validity in
the fixed-dimensional setting and extend the result to the
$\ell_1$-penalized case through an oracle reduction argument. The theoretical
analysis yields conditional weak convergence of the centered and scaled
bootstrap estimator and thereby implies asymptotically correct coverage for
standard bootstrap confidence intervals.
\newrev{The theoretical guarantee applies to the procedure in which the
Cox--Lasso selects a variable set once in the original data and the bootstrap
approximates the sampling distribution of the unpenalized Cox refit while
holding that selected variable set fixed.}
Studentized intervals are covered by additionally requiring consistency of the
usual Cox information-based standard-error estimator in the \newrev{unpenalized Cox refit}.

\begin{newrevblock}
The simulation study complements the asymptotic theory by examining
finite-sample performance across sample sizes, correlation structures,
censoring levels, and two implementable tuning strategies. In the first
strategy, \(\lambda\) is selected by minimizing the AIC along the Cox--Lasso
solution path. In the second, \(\lambda_{\min}\) denotes the value of
\(\lambda\) that minimizes the \(K\)-fold cross-validation error. Both
strategies use the same \(\ell_1\)-penalty; they differ only in how the
regularization parameter \(\lambda\) is selected.

The proposed bootstrap procedure improves coverage relative to naive
post-selection Wald intervals in many settings, although the magnitude of the
improvement depends on the coefficient, sample size, censoring level, and
tuning strategy. Coverage differences between the methods generally became
smaller with increasing sample size, consistent with the first-order
asymptotic result.

The comparison between AIC-based and cross-validation-based tuning should be
interpreted as a finite-sample sensitivity analysis of two practically
available strategies for selecting \(\lambda\). Neither strategy uses knowledge
of the true active set or of a value of \(\lambda\) that would optimize
empirical coverage. Moreover, the oracle-type theory assumes consistent
recovery of the true active set but does not establish this property for either
of the two tuning strategies in full generality. The simulation results
therefore describe the finite-sample behavior of the proposed procedure under
practical tuning rather than under an oracle choice of \(\lambda\).
\end{newrevblock}

\rev{The choice of bootstrap interval also matters in finite samples.
Percentile and studentized intervals were generally more stable than
Wald-type bootstrap intervals and behaved similarly in several scenarios.
Percentile intervals use the estimated bootstrap distribution directly,
whereas studentized intervals additionally account for estimated scale. By
contrast, Wald-type intervals rely on normal critical values and are therefore
more sensitive to skewness and unstable information estimates.}

\rev{The debiased estimator provides an important high-dimensional benchmark,
but it addresses a different inferential problem. The simulations evaluate
coverage conditional on selection in the original sample, whereas debiased
procedures are generally designed for componentwise inference on a fixed
coordinate of the high-dimensional parameter. The relevant comparison is
therefore not whether either approach dominates uniformly, but whether the
proposed bootstrap provides coherent uncertainty quantification for effects
reported after \newrev{Cox--Lasso variable selection}. Remaining undercoverage in some
small-sample settings represents a finite-sample limitation of the present
implementation rather than a contradiction of the first-order result.}

The real data application illustrates that interval construction can
materially affect uncertainty statements for coefficients reported after
Cox--Lasso \newrev{variable selection}.
\newrev{The bootstrap-based intervals reflect the sampling variability of the
unpenalized Cox refit with the selected variable set held fixed; they do not
incorporate the additional variability that would arise from repeating the
selection step. They may still differ from naive Wald intervals because they
use the bootstrap distribution rather than a first-order normal approximation.}
Taken together, the theoretical, simulation, and application results indicate
that the model-based bootstrap provides a principled and computationally
feasible approach to this reporting problem.

The resampling construction is not intrinsically tied to
$\ell_1$-penalization, because it is based on a plug-in approximation to the
fitted multiplicative intensity structure rather than on a particular
property of the Lasso estimator. It may therefore also be useful for
finite-sample inference in standard Cox regression or after alternative
selection procedures, provided that appropriate stability and regularity
conditions hold. The same principle may be adaptable to related
semiparametric survival models, including competing risks and multistate
models.

Several limitations should be noted. The results establish first-order
validity only and do not imply higher-order refinements or second-order
accuracy. BCa intervals and related corrections are therefore not considered.
The analysis also assumes correct specification of the Cox model and focuses
on Lasso-based selection.
%\rev{The raw penalized Cox--Lasso coefficient is not treated as an
%inferential target because penalization introduces shrinkage bias and changes
%the limiting object. The results instead concern the unpenalized Cox estimator
%in the model selected from the original data.}
Extensions to other penalties, selection mechanisms, model misspecification,
and broader high-dimensional regimes require additional theory.
\plrev{A particularly relevant extension is a procedure-aware target that
incorporates the complete selection-and-estimation workflow rather than
conditioning on the model selected in the original sample.}
Further work may also examine higher-order refinements, support stability under
specific tuning rules, computational strategies for larger problems, and
extensions to competing risks and multistate settings.

\begin{acknowledgement}
\textit{Funding.} \rev{This work was supported by the Deutsche Forschungsgemeinschaft
(DFG, German Research Foundation; grant FR 439942859).}

\textit{Use of generative AI.} During the preparation of this work, the authors used ChatGPT solely to
improve the language and clarity of the manuscript. The authors reviewed and
edited all generated content as needed and take full responsibility for the
content of the manuscript.

\textit{Author contributions.} L.S. conceived the study, implemented the methods, performed the analyses, and
wrote the manuscript. S.F.-W. and A.G. contributed to the methodological development and
provided supervision, review, and critical feedback. All authors read and
approved the final manuscript.
\end{acknowledgement}
\vspace*{1pc}

\noindent {\bf Conflict of Interest}

\noindent The authors declare no conflicts of interest.
\vspace*{1pc}

\noindent {\bf Data availability}

\noindent \rev{The SEER data analyzed in this study are publicly available through the
National Cancer Institute SEER Program, subject to the SEER data-use agreement
\cite{seer_program}.}
\vspace*{1pc}

\noindent {\bf Ethics approval and patient consent}

\noindent \rev{This study used de-identified SEER registry data obtained under the SEER data-use agreement. No new patient data were collected and no intervention was performed; separate ethics approval and individual patient consent were therefore not required for this secondary analysis.}
\vspace*{1pc}

\noindent {\bf Code availability}

\noindent \rev{The complete code used for the simulation study and data analysis is
available at
\url{https://github.com/lena-222/BootstrapPartyLassoCox}.}
\vspace*{1pc}

\noindent {\bf Supporting Information}

\noindent \rev{Additional technical proofs, simulation details, supplementary tables, and supplementary figures are available as Supporting Information. Code for reproducing the simulation study and data analysis is available at
\url{https://github.com/lena-222/BootstrapPartyLassoCox}.}

% Precompiled bibliography for arXiv; avoids an additional BibTeX step.

\label{lastpage}

\end{document}